\newcommand{\mission}{\textbf{\textsc{mission}}}
\newtheorem{theorem}{Theorem}
\newtheorem{lemma}{Lemma}
\newtheorem{assumption}{Assumption}
\icmltitlerunning{MISSION: Feature Selection via Sketching}
\begin{document}

\twocolumn[
\icmltitle{MISSION: Ultra Large-Scale Feature Selection using Count-Sketches}
\icmlsetsymbol{equal}{*}

\begin{icmlauthorlist}
\icmlauthor{Amirali Aghazadeh}{equal,stanford}
\icmlauthor{Ryan Spring}{equal,rice-cs}
\icmlauthor{Daniel LeJeune}{rice-ece}
\icmlauthor{Gautam Dasarathy}{rice-ece}
\icmlauthor{Anshumali Shrivastava}{rice-cs}
\icmlauthor{Richard G. Baraniuk}{rice-ece}
\end{icmlauthorlist}

\icmlaffiliation{stanford}{Department of Electrical Engineering, Stanford University, Stanford, California}
\icmlaffiliation{rice-cs}{Department of Computer Science, Rice University, Houston, Texas}
\icmlaffiliation{rice-ece}{Department of Electrical and Computer Engineering, Rice University, Houston, Texas}
\icmlcorrespondingauthor{Anshumali Shrivastava}{anshumali@rice.edu}
\icmlkeywords{Variable Selection; Feature Selection; Large-Scale Machine Learning; Randomized Hashing}

\vskip 0.3in
]

\printAffiliationsAndNotice{\icmlEqualContribution}

\begin{abstract}
Feature selection is an important challenge in machine learning. 
It plays a crucial role in the {\em explainability} of machine-driven decisions that are rapidly permeating throughout modern society. 
Unfortunately, the explosion in the size and dimensionality of real-world datasets poses a severe challenge to standard feature selection algorithms. 
Today, it is not uncommon for datasets to have billions of dimensions. 
At such scale, even storing the feature vector is impossible, causing most existing feature selection methods to fail. 
Workarounds like feature hashing, a standard approach to large-scale machine learning, helps with the computational feasibility, but at the cost of losing the interpretability of features. 
In this paper, we present \mission{}, a novel framework for ultra large-scale feature selection that performs stochastic gradient descent while maintaining an efficient representation of the features in memory using a Count-Sketch data structure. 
\mission{} retains the simplicity of feature hashing without sacrificing the interpretability of the features while using only $\mathcal{O}(\log^2{p})$ working memory. 
We demonstrate that \mission{} accurately and efficiently performs feature selection on real-world, large-scale datasets with billions of dimensions.  
\end{abstract}

\section{Introduction}
\label{sec:introduction}

Feature selection is an important step in extracting interpretable patterns from data. It has numerous applications in a wide range of areas, including natural-language processing, genomics, and chemistry. Suppose that there are $n$ ordered pairs $({\bf X}_i,y_i)_{i\in [n]}$, where ${\bf X}_i \in \mathbb{R}^p$ are $p$-dimensional {\em feature vectors}, and $y_i\in \mathbb{R}$ are scalar outputs. Feature selection aims to identify a small subset of features (coordinates of the $p$-dimensional feature vector) that best models the relationship between the data ${\bf X}_i$ and the output $y_i$. 

A significant complication that is common in modern engineering and scientific applications is that the feature space $p$ is ultra high-dimensional. For example, Weinberger introduced a dataset with 16 trillion ($p=10^{13}$) unique features \cite{weinberger2009feature}. A 16 trillion dimensional feature vector (of double 8 bytes) requires 128 terabytes of working memory. Problems from modern genetics are even more challenging. A particularly useful way to represent a long DNA sequence is by a feature vector that counts the occurrence frequency of all length-$K$ sub-strings called $K$-mers. This representation plays an important role in large-scale regression problems in computational biology \cite{wood2014kraken,bray2015near,vervier2016large,aghazadeh2016universal}. Typically, $K$ is chosen to be  larger than 12, and these strings are composed of all possible combinations of 16 characters ($\{\text{A},\text{T},\text{C},\text{G}\}$ in addition to 12 wild card characters). In this case, the feature vector dimension is $p = 16^{12} = 2^{48}$. A vector of size $2^{48}$ single-precision variables requires approximately 1 petabyte of space! 

For ultra large-scale feature selection problems, it is impossible to run standard explicit regularization-based methods like $\ell_1$ regularization \cite{shalev2011stochastic,tan2014towards} or to select hyperparameters with a constrained amount of memory \cite{langford2009sparse}. This is not surprising, because these methods are not scalable in terms of memory and computational time \cite{duchi2008efficient}. Another important operational concern is that most datasets represent features in the form of strings or tokens. For example, with DNA or $n$-gram datasets, features are represented by strings of characters. Even in click-through data \cite{mcmahan2013ad}, features are indexed by textual tokens. Observe that mapping each of these strings to a vector component requires maintaining a dictionary whose size equals the length of the feature vector. As a result, one does not even have the capability to create a numerical exact vector representation of the features. 

Typically, when faced with such large machine learning tasks, the practitioner chooses to do \emph{feature hashing}~\cite{weinberger2009feature}. Consider a 3-gram string ``abc". With feature hashing, one uses a lossy, random hash function $h:{\rm strings} \rightarrow \{0,1,2,\dots,R\}$ to map ``abc" to a feature number $h(abc)$ in the range $\{0,1,2,\dots,R\}$. This is extremely convenient because it enables one to avoid creating a large look-up dictionary. Furthermore, this serves as a dimensionality reduction technique, reducing the problem dimension to $R$. Unfortunately, this convenience comes at a cost. Given that useful dimensionality reduction is strictly surjective (i.e., $R < p$), we lose the identity of the original features. This is not a viable option if one cares about both feature selection and interpretability.

One reason to remain hopeful is that in such high-dimensional problems, the data vectors ${\bf X}_i$ are extremely sparse~\cite{wood2014kraken}. For instance, the DNA sequence of an organism contains only a small fraction (at most the length of the DNA sequence) of $p = 16^{12}$ features. The situation is similar whether we are predicting click-through rates of users on a website or if we seek $n$-gram representations of text documents \cite{mikolov2013efficient}. In practice, ultra high-dimensional data is almost always ultra-sparse. Thus, loading a sparse data vector into memory is usually not a concern. The problem arises in the intermediate stages of traditional methods, where dense iterates need to be tracked in the main memory. One popular approach is to use \emph{greedy thresholding} methods~\cite{maleki2009coherence,mikolov2013efficient,jain2014iterative,jain2017partial} combined with stochastic gradient descent (SGD) to prevent the feature vector ${\boldsymbol \beta}$ from becoming too dense and blowing up in memory. In these methods, the intermediate iterates are regularized at each step, and a full gradient update is never stored nor computed (since this is memory and computation intensive). However, it is well known that greedy thresholding can be myopic and can result in poor convergence. We clearly observe this phenomenon in our evaluations. See Section~\ref{sec:simulation} for details.

In this paper we tackle the ultra large-scale feature selection problem, i.e., feature selection with billions or more dimensions. We propose a novel feature selection algorithm called \mission{}, a {\bf M}emory-efficient, {\bf I}terative {\bf S}ketching algorithm for {\bf S}parse feature select{\bf ION}. \mission{}, that takes on all the concerns outlined above. \mission{} matches the accuracy performance of existing large-scale machine learning frameworks like Vowpal Wabbit (VW) \cite{agarwal2014reliable} on real-world datasets. However, in contrast to VW, \emph{\mission{} can perform feature selection exceptionally well}. Furthermore, \mission{} significantly surpasses the performance of classical algorithms such as Iterative Hard Thresholding (IHT), which is currently the popular feature selection alternative concerning the problem sizes we consider. 

{\bf Contributions:} In this work, we show that the two-decade old Count-Sketch data structure~\cite{charikar2002finding} from the streaming algorithms literature is ideally suited for ultra large-scale feature selection. The Count-Sketch data structure enables us to retain the convenience of feature hashing along with the identity of important features. Moreover, Count-Sketch can accumulate gradients updates over several iterations because of linear aggregation. This aggregation eliminates the problem of myopia associated with existing greedy thresholding approaches.

In particular, we force the parameters (or feature vector) to reside in a memory-efficient Count-Sketch data structure \cite{charikar2002finding}. SGD gradient updates are easily applied to the Count-Sketch.
\begin{figure}[t]
\vspace{0cm}
\centering
\includegraphics[width=0.45\textwidth]{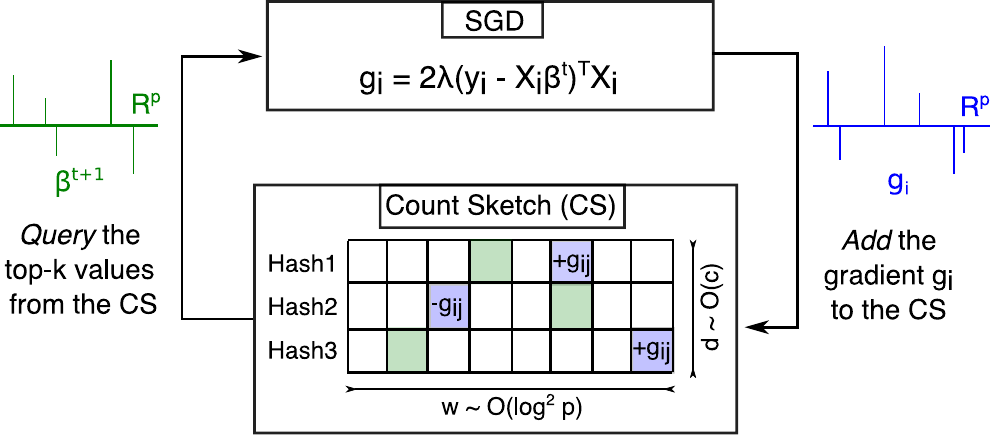}
\vspace{0cm}
\caption{Schematic of the \mission{} algorithm. \mission{} iteratively adds the stochastic gradient term $g_i \in \mathbb{R}^p$ into a Count-Sketch and queries back the top-$k$ heavy hitters from the Count-Sketch. The Count-Sketch requires $\mathcal{O}(\text{log}^2 p)$ memory to store a sketch of the $\mathcal{O}(p)$-dimensional feature vector ${\boldsymbol \beta}$. }
\label{fig:mission_schem}
\vspace{0cm}
\end{figure}
Instead of moving in the gradient direction and then greedily projecting into a subspace defined by the regularizer (e.g., in the case of LASSO-based methods), \mission{} adds the gradient directly into the Count-Sketch data structure, where it aggregates with all the past updates. See Fig.~\ref{fig:mission_schem} for the schematic. At any point of time in the iteration, this data structure stores a compressed, randomized, and noisy sketch of the sum of all the gradient updates, while preserving the information of the \emph{heavy-hitters}---the coordinates that accumulate the highest amount of energy. In order to find an estimate of the feature vector, \mission{} queries the Count-Sketch. The Count-Sketch is used in conjunction with a top-$k$ heap, which explicitly stores the features with the heaviest weights. Only the features in the top-$k$ heap are considered active, and the rest are set to zero. However, a representation for every weight is stored, in compressed form, inside the Count-Sketch.

We demonstrate that \mission{} surpasses the sparse recovery performance of classical algorithms such as Iterative Hard Thresholding (IHT), which is the only other method we could run at our scale. In addition, experiments suggest that the memory requirements of \mission{} scale well with the dimensionality $p$ of the problem. \mission{} matches the accuracy of existing large-scale machine learning frameworks like Vowpal Wabbit (VW) on real-world, large-scale datasets. Moreover, \mission{} achieves comparable or even better accuracy while using significantly fewer features. 

\section{Review: Streaming Setting and the Count-Sketch Algorithm}
\label{sec:review_CS}

\label{sec:Countsketch}
In the streaming setting, we are given a very high-dimensional vector ${\boldsymbol \beta} \in \mathbb{R}^p$ that is too costly to store in memory. We see only a very long sequence of updates over time. The only information available at time $t$ is of the form $(i, \Delta)$, which means that coordinate $i$ is incremented (or decremented) by the amount $\Delta$. We are given a limited amount of storage, on the order of $\mathcal{O}(\log{p})$, which means that we can never store the entire sequence of updates. Sketching algorithms aim to estimate the value of current item $i$, after any number of updates using only $\mathcal{O}(\log{p})$ memory. Accurate estimation of heavy coordinates is desirable.

Count-Sketch is a popular algorithm for estimation in the streaming setting. Count-Sketch keeps a matrix of counters (or bins) $\mathcal{S}$ of size $d \times w \sim \mathcal{O}(\log{p})$, where $d$ and $w$ are chosen based on the accuracy guarantees. The algorithm uses $d$ random hash functions $h_j \ \ j \in \{1,\ 2,...,\ d\}$ to map the vector's components to bins $w$. $h_j:\{1,\ 2,..., \ p\} \rightarrow \{1, \ 2,...,\ w\}$ Every component $i$ of the vector is hashed to $d$ different bins. In particular, for any row $j$ of sketch $\mathcal{S}$, component $i$ is hashed into bin $\mathcal{S}(j, h_j(i))$. In addition to $h_j$, Count-Sketch uses $d$ random sign functions to map the components of the vectors randomly to $\{+1,\ -1\}$. i.e., $s_i:\{1,\ 2,..., \ D\} \rightarrow \{+1,-1\}$ A picture of this sketch data structure with three hash functions in shown inside Fig.~\ref{fig:mission_schem}. 

The Count-Sketch supports two operations: UPDATE(item i, increment $\Delta$) and QUERY(item i). The UPDATE operation updates the sketch with any observed increment. More formally, for an increment $\Delta$ to an item $i$, the sketch is updated by adding $s_j(i)\Delta$ to the cell $\mathcal{S}(j,h_j(i))$ $\forall  j \in \{1,\ 2,...,\ d\}$. The QUERY operation returns an estimate for component $i$, the median of all the $d$ different associated counters.

It has been shown that, for any sequence of streaming updates (addition or subtraction) to the vector ${\boldsymbol \beta}$, Count-Sketch provides an unbiased estimate of any component $i$, $\widehat{{\boldsymbol \beta}_i}$ such that the following holds with high probability,
\begin{align}
\label{eq:CS_accu}
 {\boldsymbol \beta}_i - \epsilon ||{\boldsymbol \beta}||_2\ \le \ \widehat{{\boldsymbol \beta}_i}   \ \le \ {\boldsymbol \beta}_i + \epsilon ||{\boldsymbol \beta}||_2.
\end{align}
It can be shown that the Eq.~(\ref{eq:CS_accu}) is sufficient to achieve near-optimal guarantees for sparse recovery with the given space budget. Furthermore, these guarantees also meet the best compressed sensing lower bounds in terms of the number of counters (or measurements) needed for sparse recovery~\cite{indyk2013sketching}.

\section{Problem Formulation}
Consider the feature selection problem in the ultra high-dimensional setting: We are given the dataset $({\bf X}_i,y_i)$ for $i\in[n]=\{1,2,\dots,n\}$, where ${\bf X}_i\in \mathbb{R}^p$ and $y_i \in \mathbb{R}$ denote the $i^\text{th}$ measured and response variables. We are interested in finding the $k$-sparse ($k$ non-zero entries) feature vector (or regressor)  ${\boldsymbol \beta} \in \mathbb{R}^p$ from the optimization problem
\begin{align}
\label{eq:main_equation}
\min_{\|\boldsymbol\beta\|_0=k}{\| {\bf y}-{\bf X}{\boldsymbol\beta}\|_2},
\end{align}
where ${\bf X}=\{{\bf X}_1,{\bf X}_2,\dots,{\bf X}_n\}$ and ${\bf y}=[y_1,y_1,\dots,y_n]$ denote the data matrix and response vector and the $\ell_0$-norm $\|\boldsymbol\beta\|_0$ counts the number of non-zero entries in $\boldsymbol\beta$. 

We are interested in solving the feature selection problem for ultra high-dimensional datasets where the number of features $p$ is so large that a dense vector (or matrix) of size $p$ cannot be stored explicitly in memory. 

\subsection{Hard Thresholding Algorithms}
\label{sec:IHT}

Among the menagerie of feature selection algorithms, the class of hard thresholding algorithms have the smallest memory footprint: Hard thresholding algorithms retain only the top-$k$ values and indices of the entire feature vector using $\mathcal{O}(k \text{log}(p))$ memory \cite{jain2014iterative, blumensath2009iterative}. The \emph{iterative hard thresholding} (IHT) algorithm generates the following iterates for the $i^\text{th}$ variable in an stochastic gradient descent (SGD) framework
\begin{align}
{\boldsymbol \beta}^{t+1} \leftarrow  H_k( {\boldsymbol \beta}^{t} -2 \lambda \left( y_i - {\bf X}_i {\boldsymbol \beta}^t \right)^T {\bf X}_i)
\end{align}

The sparsity of the feature vector $\boldsymbol \beta^t$, enforced by the hard thresholding operator $H_k$, alleviates the need to store a vector of size $\mathcal{O}(p)$ in the memory in order to keep track of the changes of the features over the iterates.

Unfortunately, because it only retains the top-$k$ elements of ${\boldsymbol \beta}$, the hard thresholding procedure greedily discards the information of the non top-$k$ coordinates from the previous iteration. In particular, it clips off coordinates that might add to the support set in later iterations. This drastically affects the performance of hard thresholding algorithms, especially in real-world scenarios where the design matrix ${\bf X}$ is not random, normalized, or well-conditioned. In this regime, the gradient terms corresponding to the true support typically arrive in lagging order and are prematurely clipped in early iterations by $H_k$. The effect of these \emph{lagging gradients} is present even in the SGD framework, because the gradients are quite noisy, and only a small fraction of the energy of the true gradient is expressed in each iteration. It is not difficult to see that these small energy, high noise signals can easily cause the greedy hard thresholding operator to make sub-optimal or incorrect decisions. Ideally, we want to accumulate the gradients to get enough confidence in signal and to average out any noise. However, accumulating gradients will make the gradient vector dense, blowing up the memory requirements. This aforementioned problem is in fact symptomatic of all other thresholding variants including the \emph{Iterative algorithm with inversion} (ITI) \cite{maleki2009coherence} and the \emph{Partial hard thresholding} (PHT) algorithm \cite{jain2017partial}.

\section{The \mission{} Algorithm}
\label{sec:algorithm}

We now describe the \mission{} algorithm. First, we initialize the Count-Sketch $\mathcal{S}$ and the feature vector ${\boldsymbol \beta}^{t=0}$ with zeros entries. The Count-Sketch hashes a $p$-dimensional vector into $\mathcal{O}(\text{log}^2 p)$ buckets (Recall Fig.~\ref{fig:mission_schem}). We discuss this particular choice for the size of the Count-Sketch and the memory-accuracy trade offs of \mission{} in Sections $\ref{sec:mem_log_scale}$ and $\ref{sec:feature_extraction}$. 

At iteration $t$, \mission{} selects a random row ${\bf X}_i$ from the data matrix ${\bf X}$ and computes the stochastic gradient update term using the learning rate $\lambda$. $g_i~=~2 \lambda \left( y_i - {\bf X}_i {\boldsymbol \beta}^t \right)^T {\bf X}_i $ i.e. the usual gradient update that minimizes the \emph{unconstrained} quadratic loss $\|{\bf y} - {\bf X}{\boldsymbol \beta}\|_2^2$. The data vector ${\bf X}_i$ and the corresponding stochastic gradient term are sparse. We then add the non-zero entries of the stochastic gradient term $\{g_{ij}: \forall{j}\,\,\, g_{ij}>0\}$ to the Count-Sketch $\mathcal{S}$. Next, \mission{} queries the top-$k$ values of the sketch to form ${\boldsymbol \beta}^{t+1}$. We repeat the same procedure until convergence. \mission{} returns the top-$k$ values of the Count-Sketch as the final output of the algorithm. The \mission{} algorithm is detailed in Alg.~$\ref{alg:algorithm}$. \mission{} easily extends to other loss functions such as the hinge loss and logistic loss.

\begin{algorithm}[tb]
   \caption{\mission{}}
   \label{alg:algorithm}
\begin{algorithmic}
   \STATE {\bf Initialize}: $\beta^0 = 0$, $\mathcal{S}$ (Count-Sketch), $\lambda$ (Learning Rate)
   \WHILE {not stopping criteria}
   \STATE Find the gradient update $g_i=\lambda \left( 2\left( y_i - {\bf X}_i {\boldsymbol \beta}^t \right)^T {\bf X}_i \right)$ 
   \STATE Add the gradient update to the sketch $g_i \rightarrow \mathcal{S}$
   \STATE Get the top-$k$ heavy-hitters from the sketch ${\boldsymbol \beta}^{t+1} \leftarrow  \mathcal{S}$
   \ENDWHILE
   \STATE {\bf Return:} The top-$k$ heavy-hitters from the Count-Sketch
\end{algorithmic}
\end{algorithm}

{\bf \mission{} is Different from Greedy Thresholding:} Denote the gradient vector update at any iteration $t$ as $u_t$. It is not difficult to see that starting with an all-zero vector ${\boldsymbol \beta}_0$, at any point of time $t$, the Count-Sketch state is equivalent to the sketch of the vector $\sum_{i=1}^t u_t$. In other words, the sketch aggregates the compressed aggregated vector. Thus, even if an individual SGD update is noisy and contains small signal energy, thresholding the Count-Sketch is based on the average update over time. This averaging produces a robust signal that cancels out the noise. We can therefore expect \mission{} to be superior over thresholding. 

In the supplementary materials, we present initial theoretical results on the convergence of \mission{}. Our results show that, under certain assumptions, the full-gradient-descent version of \mission{} converges \emph{geometrically} to the true parameter ${\boldsymbol \beta}\in \mathbb{R}^p$ up to some additive constants. The exploration of these assumptions and the extension to the SGD version of \mission{} are exciting avenues for future work. 

{\bf Feature Selection with the Ease of Feature Hashing:} As argued earlier, the features are usually represented with strings, and we do not have the capability to map each string to a unique index in a vector without spending $O(p)$ memory. Feature hashing is convenient, because we can directly access every feature using hashes. We can use any lossy hash function for strings. \mission{} only needs a few independent hash functions (3 in our Count-Sketch implementation) to access any component. The top-$k$ estimation is done efficiently using a heap data structure of size $k$. Overall, we only access the data using efficient hash functions, which can be easily implemented in large-scale systems. 

\section{Simulations}
\label{sec:simulation}
We designed a set of simulations to evaluate \mission{} in a controlled setting. In contrast to the ultra large-scale, real-world experiments of Section~\ref{sec:experiments}, in the section the data matrices are drawn from a random Gaussian distribution and the ground truth features are known.

\begin{table*}[t]
\small
\centering
\caption{Comparison of \mission{} against hard thresholding algorithms in subset selection under adversarial effects. We first report the percentage of instances in which the algorithms accurately find the solution (ACC) with no attenuation ($\alpha = 1$) over 100 random trials. We then report the mean of the maximum level of attenuation $\alpha$ applied to the columns of design ${\bf X}$ before the algorithms fail to recover the support of ${\boldsymbol \beta}$ (over the trials that all algorithms can find the solution with $\alpha=1$).}
\vspace{0.2cm}
\scalebox{1}{%
\begin{tabular}{lccccccccccccc}
\hline
($n$, $k$) &&\multicolumn{3}{c}{\mission{}}  & \multicolumn{3}{c}{IHT }  &  \multicolumn{3}{c}{ITI}  & \multicolumn{2}{c}{PHT}   \\
\hline 
&& ACC$_{\alpha=1}$ & $\alpha$ && ACC$_{\alpha=1}$ & $\alpha$ && ACC$_{\alpha=1}$& $\alpha$ && ACC$_{\alpha=1}$ & $\alpha$  \\
\hline
(100, 2)    &&  {\bf 100}\%  & {\bf 2.68 $\pm$ 0.37} &&  {\bf 100}\%  & 1.49 $\pm$ 0.33 && 91\%  & 1.33 $\pm$ 0.23 && 64\%  & 2.42 $\pm$ 0.87  \\ 
(100, 3)    &&  {\bf 100}\%  & {\bf 2.52 $\pm$ 0.36} &&  92\%            & 1.36 $\pm$ 0.46 && 70\%  & 1.15 $\pm$ 0.20 && 42\% & 2.05 $\pm$ 0.93 \\ 
(100, 4)    &&  {\bf 100}\%  & {\bf 2.53 $\pm$ 0.23} &&  72\%            & 1.92 $\pm$ 0.91 && 37\%  & 1.03 $\pm$ 0.09 && 39\% & 2.13 $\pm$ 1.07 \\ 
(200, 5)    &&  {\bf 100}\%  & {\bf 4.07 $\pm$ 0.36} &&  99\%            &  2.34 $\pm$ 1.12 && 37\%  & 1.15 $\pm$ 0.22 && 83\% & 2.75 $\pm$ 1.30 \\ 
(200, 6)    &&  {\bf 100}\%  & {\bf 4.17 $\pm$ 0.24} &&  97\%            & 2.64 $\pm$ 1.14 && 23\%  & 1.11 $\pm$ 0.12 && 73\% & 2.26 $\pm$ 1.33 \\ 
(200, 7)    &&  {\bf 100}\%  & {\bf 4.07 $\pm$ 0.11} &&  83\%            & 1.64 $\pm$ 1.01 && 14\%  & 1.11 $\pm$ 0.12 && 75\% & 3.39 $\pm$ 1.36 \\ 

\hline
\end{tabular}}
\label{table:attenuation}
\end{table*}

\subsection{Phase Transition}
\label{sec:phase_trans}

We first demonstrate the advantage of \mission{} over greedy thresholding in feature selection. For this experiment, we modify \mission{} slightly to find the root of the algorithmic advantage of \mission{}: we replace the Count-Sketch with an ``identity'' sketch, or a sketch with a single hash function, $h(i) = i$. In doing so, we eliminate the complexity that Count-Sketch adds to the algorithm, so that the main difference between \mission{} and IHT is that \mission{} accumulates the gradients. To improve stability, we scale the non top-$k$ elements of $\mathcal{S}$ by a factor $\gamma \in (0, 1)$ that begins very near 1 and is gradually decreased until the algorithm converges. \emph{Note:} it is also possible to do this scaling in the Count-Sketch version of \mission{} efficiently by exploiting the linearity of the sketch. 

Fig.~\ref{fig:phase_transition} illustrates the empirical phase transition curves for sparse recovery using \mission{} and the hard thresholding algorithms. The phase transition curves show the points where the algorithm successfully recovers the features in $>50\%$ of the random trails. \mission{} shows a better phase transition curve compared to IHT by a considerable gap. 

\begin{figure}[t]
\vspace{-0.1in}
\centering
\includegraphics[width=0.33\textwidth]{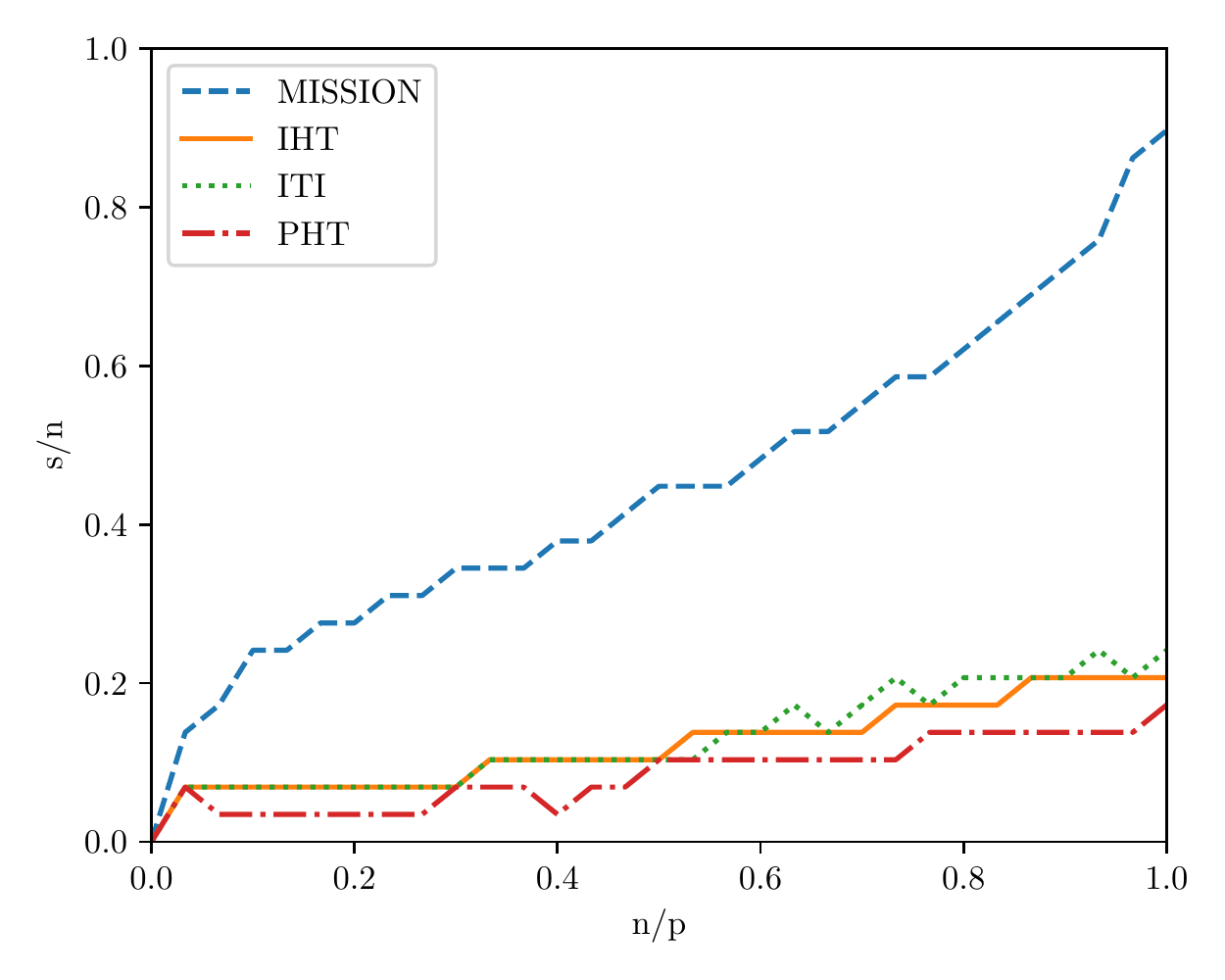}
\vspace{-0.2in}
\caption{Empirical phase transition in recovering a binary feature vector ${\boldsymbol \beta}$ in $p = 1000$-dimensional space with a Gaussian data matrix ${\bf X}$. We illustrate the empirical $50\%$ probability of success curves averaged over $T = 20$ trials. \mission{} outperforms the thresholding algorithms by a large margin.}
\label{fig:phase_transition}
\vspace{-0.2cm}
\end{figure}

\subsection{Lagging Gradient: Superiority of Count-Sketches over Greedy Thresholding}
\label{sec:lag_grad}

A major problem with the IHT algorithm, especially in large-scale SGD settings, is with thresholding the coordinates with small gradients in the earlier iterations. IHT misses these coordinates, since they become prominent only after the gradients accumulate with the progression of the algorithm. The problem is amplified with noisy gradient updates such as SGD, which is unavoidable for large datasets.

This phenomenon occurs frequently in sparse recovery problems. For example, when the coordinates that correspond to the columns of the data matrix with smaller energy lag in the iterations of gradient descent algorithm, IHT thresholds these \emph{lagging-gradient} coordinates in first few iterations, and they never show up again in the support. In contrast, \mission{} retains a footprint of the gradients of all the previous iterations in the Count-Sketch. When the total sum of the gradient of a coordinate becomes prominent, the coordinate joins the support after querying the top-$k$ heavy hitters from the Count-Sketch. We illustrate this phenomena in sparse recovery using synthetic experiments. We recover sparse vector ${\boldsymbol \beta}$ from its random linear measurements ${\bf y} = {\bf X}{\boldsymbol \beta}$, where the energy of ${\bf X}$ is imbalanced across its columns. In this case, the gradients corresponding to the columns (coordinates) with smaller energy typically lag and are thresholded by IHT.

To this end, we first construct a random Gaussian data matrix ${\bf X} \in \mathbb{R}^{900\times1000}$, pick a sparse vector ${\boldsymbol \beta}$ that is supported on an index set $\mathcal{I}$, and then attenuate the energy of the columns of $\bf X$ supported by the indices in $\mathcal{I}$ by an attenuation factor of $\alpha=\{1,1.25,1.5,1.75,2,\dots,5\}$. Note that $\alpha=1$ implies that no attenuation is applied to the matrix. In Table~\ref{table:attenuation}, we report the maximum attenuation level applied to a column of data matrix {\bf X} before the algorithms fail to fully recover the support set $\mathcal{I}$ from ${\bf y}= {\boldsymbol \beta} {\bf X}$. We observe that \mission{} is consistently and up to three times more robust against adversarial attenuation of the columns of the data matrix in various design settings.

The robustness of \mission{} to the attenuation of the columns of  {\bf X} in sparse recovery task suggests that the Count-Sketch data structure enables gradient-based optimization methods such as IHT to store a footprint (or sketch) of all the gradients from the previous iterations and deliver them back when they become prominent.

\subsection{Logarithmic Scaling of the Count-Sketch Memory in \mission{}}
\label{sec:mem_log_scale}
In this section we demonstrate that the memory requirements of \mission{} grows polylogarithmically in the dimension of the problem $p$. We conduct a feature selection experiment with a data matrix ${\bf X} \in \mathbb{R}^{100\times p}$ whose entries are drawn from i.i.d. random Gaussian distributions with zero mean and unit variance. We run \mission{} and IHT to recover the feature vector ${\boldsymbol \beta}$ from the output vector ${\bf y} = {\bf X}{\boldsymbol \beta}$, where the feature vector ${\boldsymbol \beta}$ is a $k=5$-sparse vector with random support. We repeat the  same experiment $1000$ times with different realizations for the sparse feature vector ${\boldsymbol \beta}$ and report the results in Fig.~$\ref{fig:MISSION_memory}$. The left plot illustrates the feature selection accuracy of the algorithms as the dimension of the problem $p$ grows. The right plot illustrates the minimum memory requirements of the algorithms to recover the features with $100\%$ accuracy. 

The plots reveal an interesting phenomenon. The size of the Count-Sketch in \mission{} scales only polylogarithmically with the dimension of the problem. This is surprising since the aggregate gradient in a classical SGD framework becomes typically dense in early iterations and thus requires a memory of order $\mathcal{O}(p)$. \mission{}, however, stores only the \emph{essential} information of the features in the sketch using a poly-logarithmic sketch size. Note that IHT sacrifices accuracy to achieve a small memory footprint. At every iteration IHT eliminates all the information except for the top-$k$ features. We observe that, using only a logarithmic factor more memory, \mission{} has a significant advantage over IHT in recovering the ground truth features.

\begin{figure}
\begin{center}
\vspace*{-2.5mm}
     \includegraphics[width=0.45\textwidth]{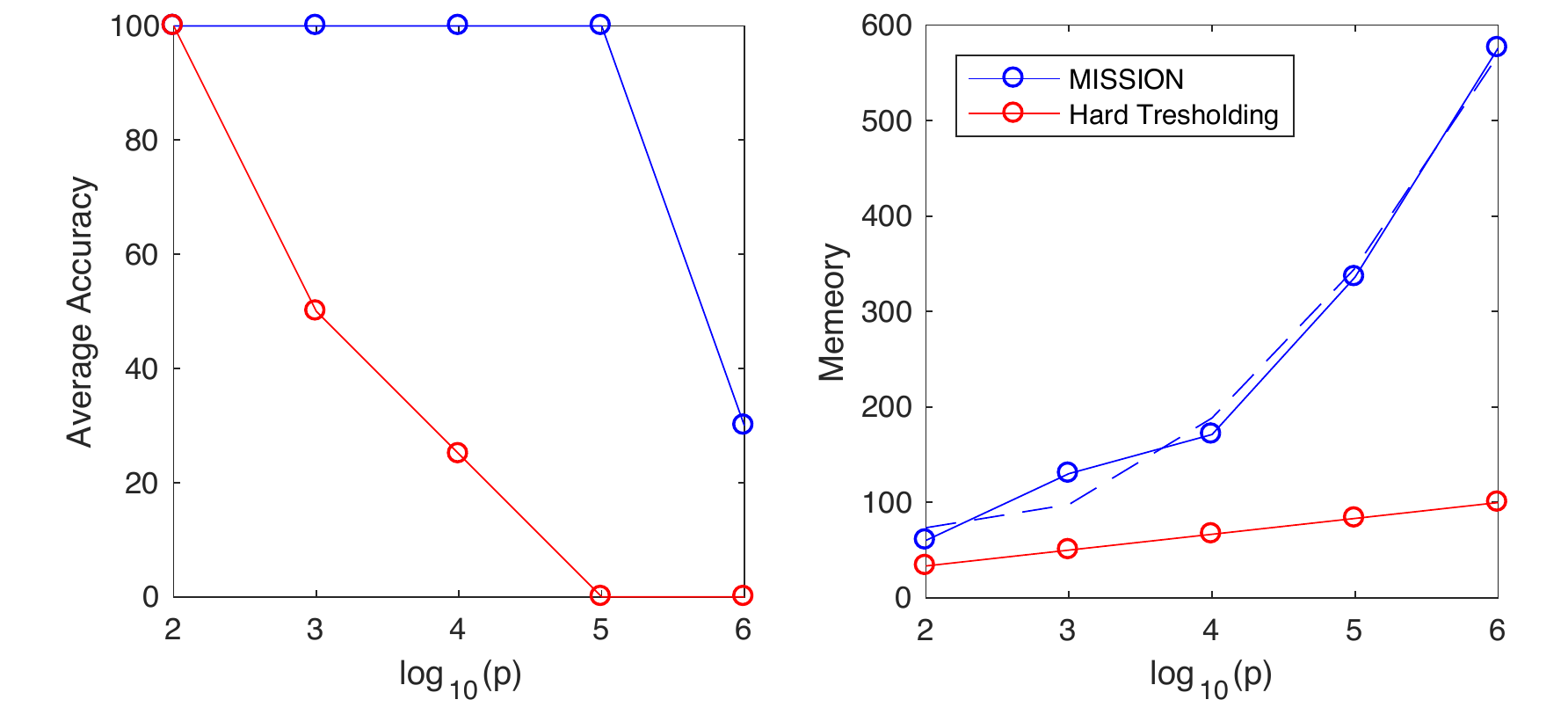}
        \caption{\small Feature selection accuracy and memory requirements of \mission{} and Hard Thresholding. The memory requirements of \mission{} grows polylogarithmcially $\sim \mathcal{O}(\text{log}^2(p))$ (dotted line illustrates quadratic fit) in $p$. With only a logarithmic factor more memory, \mission{} has significant advantage over Hard Thresholding in terms of feature selection accuracy.}\vspace*{-3.5mm}
        \label{fig:MISSION_memory}
        \end{center}
 \end{figure}

\section{Experiments}
\label{sec:experiments}
All experiments were performed on a single machine, 2x Intel Xeon E5-2660 v4 processors (28 cores / 56 threads) with 512 GB of memory. The code \footnote{\url{ https://github.com/rdspring1/MISSION}} for training and running our randomized-hashing approach is available online. We designed the experiments to answer these questions:
\begin{enumerate}
    \itemsep-0.25em
    \item Does \mission{} outperform IHT in terms of classification accuracy? In particular, how much does myopic thresholding affect IHT in practice?
    \item How well does \mission{} match the speed and accuracy of feature hashing (FH)?
    \item How does changing the number of top-$k$ features affect the accuracy and behaviour of the different methods?
    \item What is the effect of changing the memory size of the Count-Sketch data structure on the classification accuracy of \mission{} in read-world datasets?
    \item Does \mission{} scale well in comparison to the different methods on the ultra large-scale datasets ($>350$ GB in size)?
\end{enumerate}

\subsection{Large-scale Feature Extraction}
\label{sec:feature_extraction}

{\bf Datasets:} We used four datasets in the experiments: 1) KDD2012, 2) RCV1, 3) Webspam--Trigram, 4) DNA \footnote{\url{http://projects.cbio.mines-paristech.fr/largescalemetagenomics/}}. The statistics of these datasets are summarized in Table \ref{table:dataset}.

\begin{table} [ht]
\vspace{-0.45cm}
\caption{Feature extraction dataset statistics.}
\vspace{-0.25cm}
\begin{center}
    \begin{tabular}{ |l|l|l|l| } 
    \hline
    Dataset & Dim ($p$) & Train Size ($n$) & Test Size \\
    \hline
    KDD 2012 & 54,686,452 & 119,705,032 & 29,934,073 \\
    RCV1 & 47,236 & 20,242 & 677,399\\
    Webspam & 16,609,143 & 280,000 & 70,000 \\
    DNA (Tiny) & 14,890,408 & 1,590,000 & 207,468 \\
    \hline
    \end{tabular}
\end{center}
\vspace{-0.5cm}
\label{table:dataset}
\end{table}

The DNA metagenomics dataset is a multi-class classification task where the model must classify 15 different bacteria species using DNA $K$-mers. We sub-sampled the first 15 species from the original dataset containing 193 species. We use all of the species in the DNA Metagenomics dataset for the large-scale experiments (See Section \ref{sec:large_scale}). Following standard procedures, each bacterial species is associated with a reference genome. Fragments are sampled from the reference genome until each nucleotide is covered $c$ times on average. The fragments are then divided into $K$-mer sub-strings. We used fragments of length 200 and $K$-mers of length 12. Each model was trained and tested with mean coverage $c=\{0.1,1\}$ respectively. For more details, see \cite{vervier2016large}. The feature extraction task is to find the DNA $K$-mers that best represent each bacteria class.

We implemented the following approaches to compare and contrast against our approach: For all methods, we used the logistic loss for binary classification and the cross-entropy loss for multi-class classification.

{\bf MISSION:} As described in Section \ref{sec:algorithm}.\\
{\bf Iterative Hard Thresholding (IHT):} An algorithm where, after each gradient update, a hard threshold is applied to the features. Only the top-$k$ features are kept active, while the rest are set to zero. Since the features are strings or integers, we used a sorted heap to store and manipulate the top-$k$ elements. This was the only algorithm we could successfully run over the large datasets on our single machine.\\
{\bf Batch IHT:} A modification to IHT that uses mini-batches such that the gradient sparsity is the same as the number of elements in the count-sketch. We accumulate features and then sort and prune to find the top-$k$ features. This accumulate, sort, prune process is repeated several times during training. {\bf Note:} This setup requires significantly more memory than \mission{}, because it explicitly stores the feature strings. The memory cost of maintaining a set of string features can be orders of magnitude more than the flat array used by \mission{}. See Bloom Filters~\cite{broder2004network} and related literature. This setup is not scalable to large-scale datasets.\\
{\bf Feature Hashing (FH):} A standard machine learning algorithm for dimensionality reduction that reduces the memory cost associated with large datasets. FH is not a feature selection algorithm and cannot identify important features. \citep{agarwal2014reliable}

\begin{figure*} [ht]
\vspace{-0.5in}
\begin{center}
\mbox{
\hspace{-0.25in}
\includegraphics[width=0.30\textwidth]{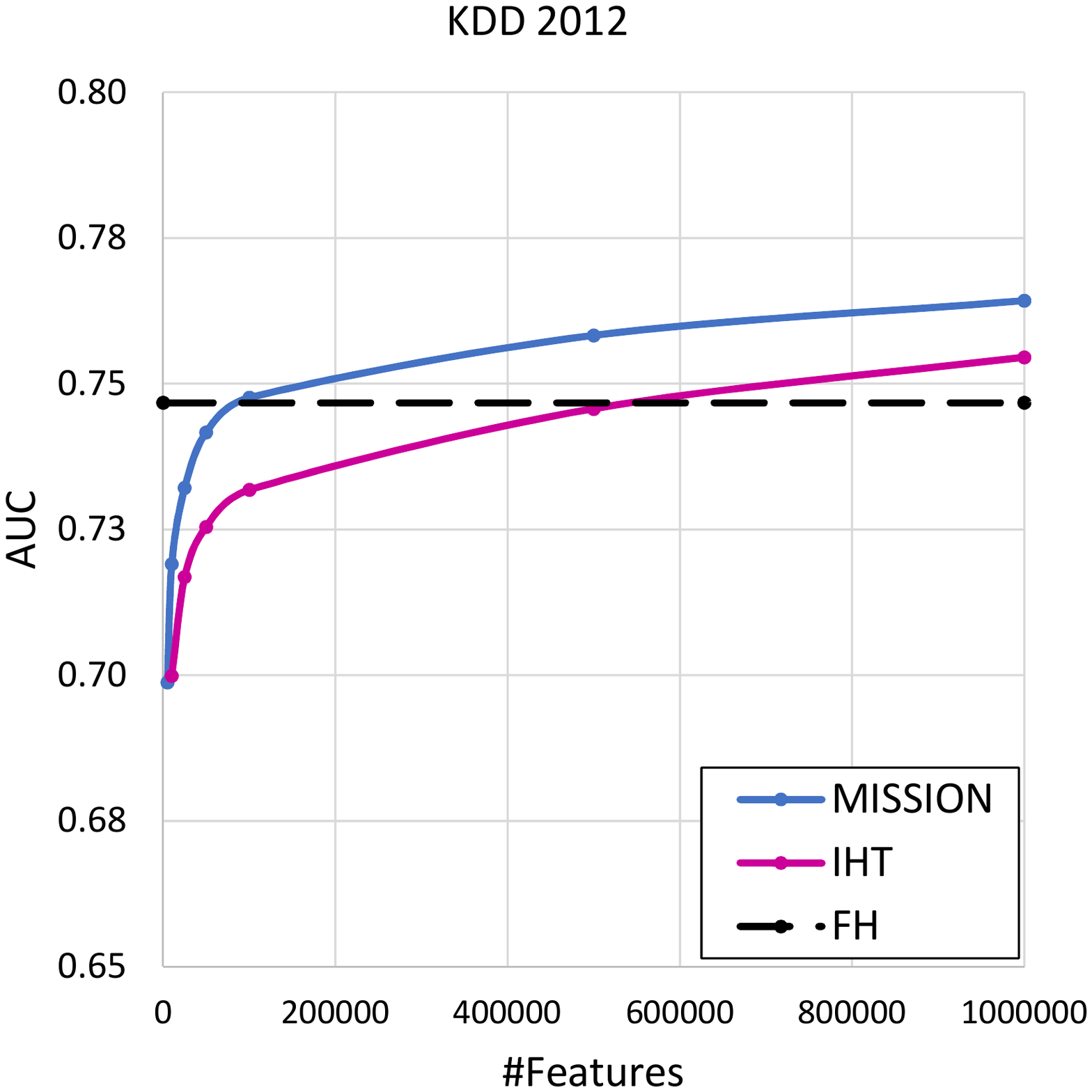}
\hspace{-0.4in}
\includegraphics[width=0.30\textwidth]{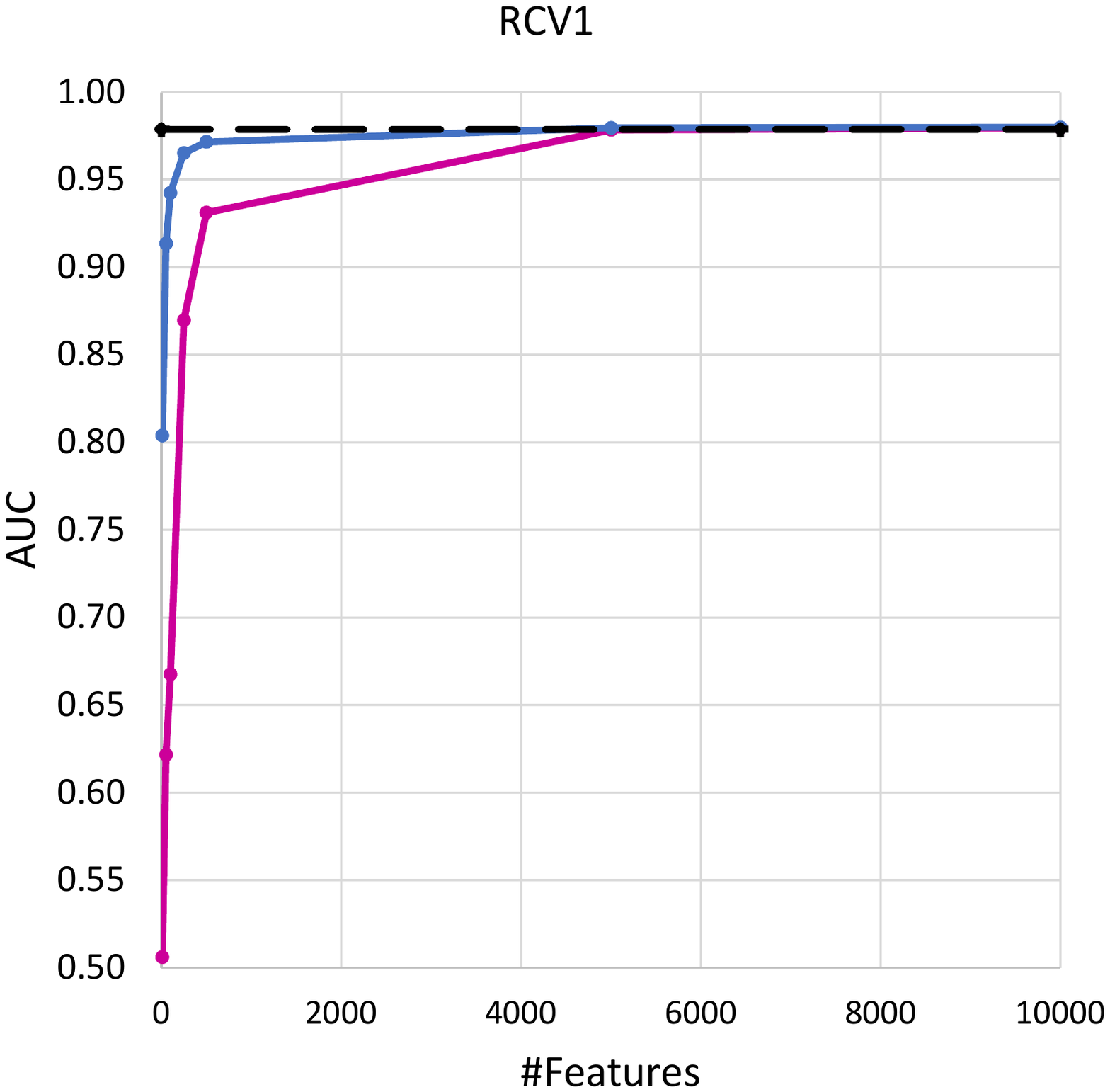}
\hspace{-0.4in}
\includegraphics[width=0.30\textwidth]{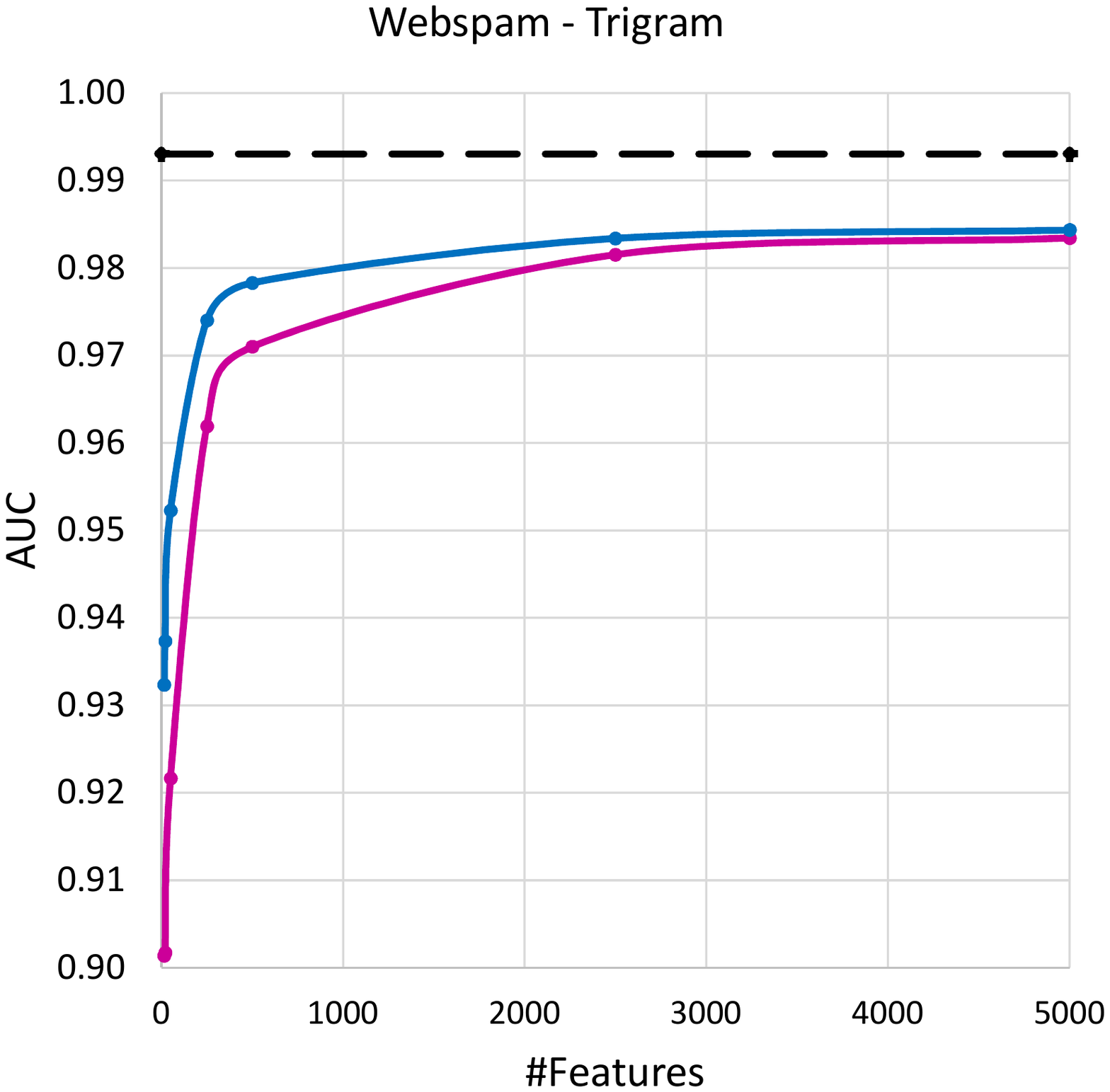}
\hspace{-0.4in}
\includegraphics[width=0.30\textwidth]{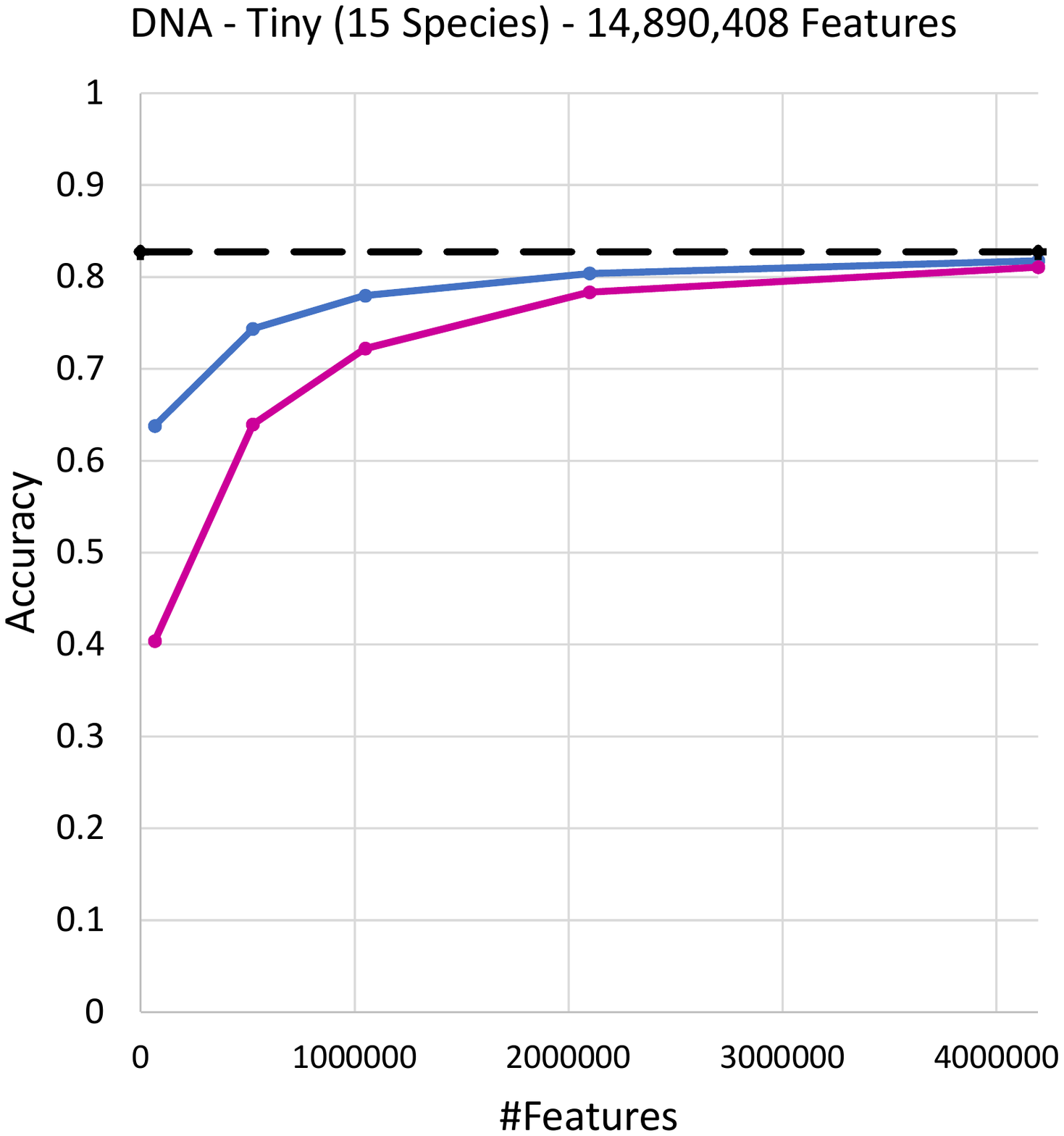}
\hspace{-0.25in}
}
\end{center}
\vspace{-0.7in}
\caption{Feature selection on the KDD-2012, RCV1, Webspam, and DNA Metagenomic (Tiny) datasets. FH is not a feature selection baseline. It is marked by a dashed black line, indicating that its performance is invariant to the number of top-$k$ features.}
\label{fig:feature_extract}
\end{figure*}

{\bf Experimental Settings:} The \mission{} and IHT algorithms searched for the same number of top-$k$ features. To ensure fair comparisons, the size of the Count-Sketch and the feature vector allocated for the FH model were equal. The size of the \mission{} and FH models were set to the nearest power of 2 greater than the number of features in the dataset. For all the experiments, the Count-Sketch data structure used 3 hash functions, and the model weights were divided equally among the hash arrays. For example, with the (Tiny) DNA metagenomics dataset, we allocated 24 bits or 16,777,216 weights for the FH model. Given 3 hash functions and 15 classes, roughly 372,827 elements were allocated for each class in the Count-Sketch.

{\bf \mission{}, IHT, FH Comparison:} Fig.~\ref{fig:feature_extract} shows that \mission{} surpasses IHT in classification accuracy in all four datasets, regardless of the number of features. In addition, \mission{} closely matches FH, which is significant because FH is allowed to model a much larger set of features than \mission{} or IHT. \mission{} is 2--4$\times$ slower than FH, which is expected given that \mission{} has the extra overhead of using a heap to track the top-$k$ features. 

\mission{}'s accuracy rapidly rises with respect to the number of top-$k$ features, while IHT's accuracy plateaus and then grows slowly to match \mission{}. This observation corroborates our insight that the greedy nature of IHT hurts performance. When the number of top-$k$ elements is small, the capacity of IHT is limited, so it picks the first set of features that provides good performance, ignoring the rest. On the other hand, \mission{} decouples the memory from the top-$k$ ranking, which is based on the aggregated gradients in the compressed sketch. By the linear property of the count-sketch, this ensures that the heavier entries occur in the top-$k$ features with high probability. 

{\bf Count-Sketch Memory Trade-Off:} Fig.~\ref{fig:DNA_Tradeoff} shows how \mission{}'s accuracy degrades gracefully, as the size of the Count-Sketch decreases. In this experiment, \mission{} only used the top 500K features for classifying the Tiny DNA metagenomics dataset. When the top-$k$ to Count-Sketch ratio is 1, then 500K weights were allocated for each class and hash array in the Count-Sketch data structure. The Batch IHT baseline was given 8,388,608 memory elements per class, enabling it to accumulate a significant number of features before thresholding to find the top-$k$ features. This experiment shows that \mission{} immediately outperforms IHT and Batch IHT, once the top-$k$ to Count-Sketch ratio is 1:1. Thus, MISSION provides a unique memory-accuracy knob at any given value of top-$k$. 

\begin{figure}[ht]
\vspace{-0.9in}
\begin{center}
\includegraphics[width=0.35\textwidth]{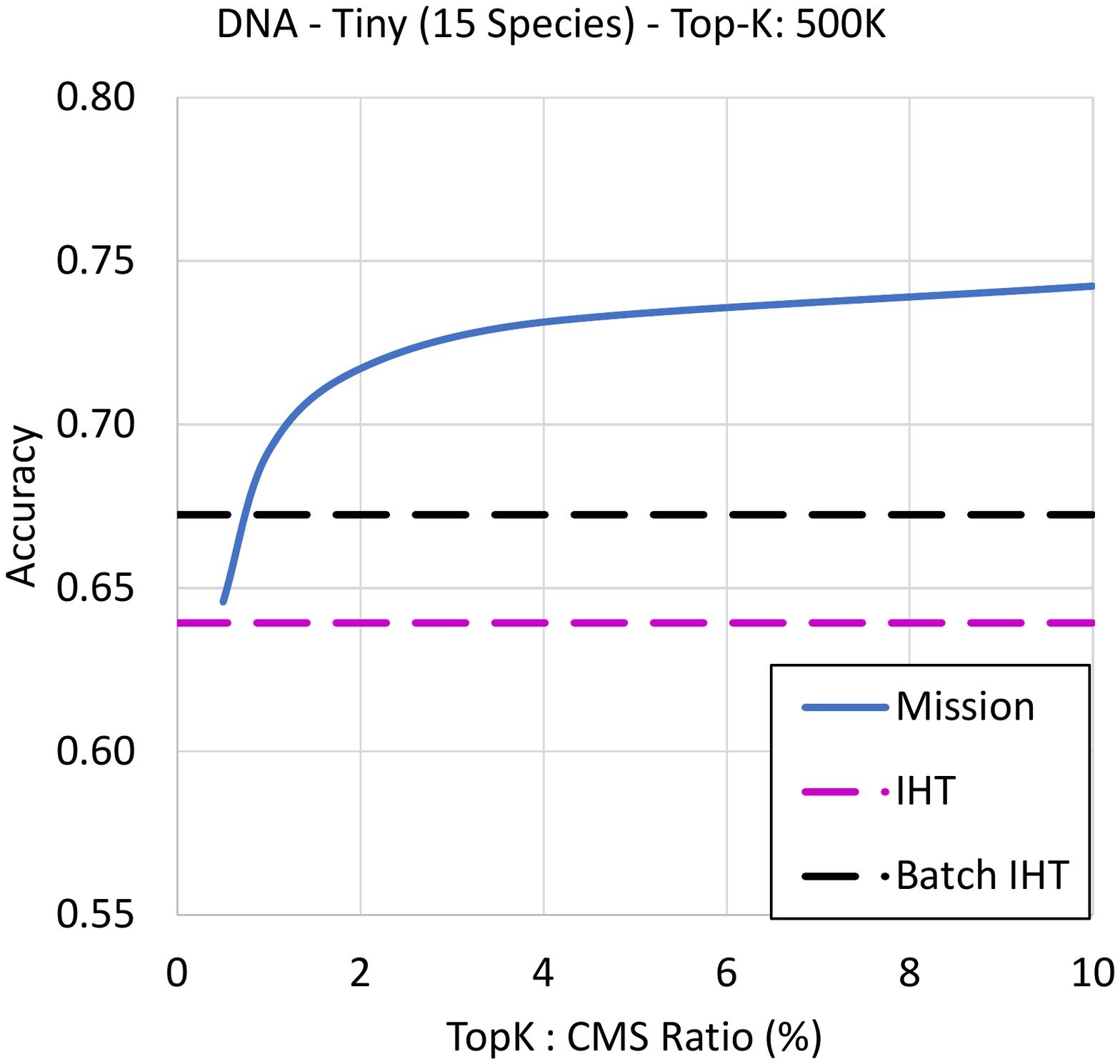}
\end{center}
\vspace{-0.9in}
\caption{Count-Sketch Memory/Accuracy Trade-off using the DNA Metagenomics (Tiny) dataset. The $x$-axis is the ratio of Count-Sketch memory per class and hash-array to the \# top-$k$ features per class. Memory Elements for Batch IHT: 8,388,608.}
\vspace{-0.5cm}
\label{fig:DNA_Tradeoff}
\end{figure} 

\subsection{Ultra Large-Scale Feature Selection}
\label{sec:large_scale}
Here we demonstrate that \mission{} can extract features from three large-scale datasets: Criteo 1TB, Splice-Site, and DNA Metagenomics.

\begin{table} [ht]
\vspace{-0.45cm}
\caption{Ultra Large-Scale dataset statistics}
\begin{center}
    \begin{tabular}{ |l|l|l|l| } 
    \hline
    Dataset & Dim ($p$) & Train Size ($n$) & Test Size \\
    \hline
    Criteo & 1M & 4,195,197,692 & 178,274,637 \\
    Splice-Site & 11.7M & 50,000,000 & 4,627,840 \\
    DNA & 17.3M & 13,792,260 & 354,285 \\
    \hline
    \end{tabular}
\end{center}
\vspace{-0.25cm}
\label{table:large_dataset}
\end{table}

{\bf Criteo 1TB:} The Criteo 1TB \footnote{\url{https://www.kaggle.com/c/criteo-display-ad-challenge}} dataset represents 24 days of click-through logs---23 days (training) + 1 day (testing). The task for this dataset is click-through rate (CTR) prediction---How likely is a user to click an ad? The dataset contains over 4 billion (training) and 175 million (testing) examples (2.5 TB of disk space). The performance metric is Area Under the ROC Curve (AUC). The VW baseline \footnote{\url{https://github.com/rambler-digital-solutions/criteo-1tb-benchmark}} achieved 0.7570 AUC score. \mission{} and IHT scored close to the VW baseline with 0.751 AUC using only the top 250K features.

\begin{table} [ht]
\vspace{-0.45cm}
\caption{Criteo 1TB. Top-K Features: 250K}
\begin{center}
    \begin{tabular}{ |c|c|c|c| } 
    \hline
    Metric & \mission{} & IHT & VW \\
    \hline
    AUC & 0.751 & 0.752 & 0.757 \\ 
    \hline
    \end{tabular}
\end{center}
\vspace{-0.25cm}
\label{fig:criteo1tb}
\end{table}

{\bf Splice-Site:}
The task for this dataset is to distinguish between true and fake splice sites using the local context around the splice site in-question. The dataset is highly skewed (few positive, many negative values), and so the performance metric is average precision (AP). Average precision is the precision score averaged over all recall scores ranging from 0 to 1. The dataset contains over 50 million (training) and 4.6 million (testing) examples (3.2 TB of disk space). All the methods were trained for a single epoch with a learning rate of 0.5. \mission{}, Batch IHT, and SGD IHT tracked the top 16,384 features. FH, \mission{}, and Batch IHT used 786,432 extra memory elements. \mission{} significantly outperforms Batch IHT and SGD IHT by 2.3\%. Also, unlike in Fig.~\ref{fig:DNA_Tradeoff}, the extra memory did not help Batch IHT, since it performed the same as SGD IHT. \mission{} (17.5 hours) is 15\% slower than FH (15 hours) in wall-clock running time.

\begin{table} [ht]
\vspace{-0.45cm}
\caption{Splice-Site: Top-$k$ features: 16,384. Memory elements: 786,432. \mission{} outperforms Batch IHT and SGD IHT. }
\begin{center}
    \begin{tabular}{ |c|c|c|c|c| } 
    \hline
    Metric & FH & \mission{} & Batch IHT & SGD IHT \\
    \hline
    AP & 0.522 & 0.510 & 0.498 & 0.498 \\
    \hline
    \end{tabular}
\end{center}
\vspace{-0.25cm}
\label{fig:splice_site}
\end{table}

{\bf DNA Metagenomics:} This experiment evaluates \mission{}'s performance on a medium-sized metagenomics dataset. The parameters from the Tiny (15 species) dataset in Section~\ref{sec:feature_extraction} are shared with this experiment, except the number of species is increased to 193. The size of a sample batch with mean coverage $c=1$ increased from 7 GB (Tiny) to 68 GB (Medium). Each round (mean coverage $c=0.25$) contains 3.45 million examples and about 16.93 million unique non-zero features (p). \mission{} and IHT tracked the top 2.5 million features per class. The FH baseline used $2^{31}$ weights, about 11.1 million weights per class, and we allocated the same amount of space for the Count-Sketch. Each model was trained on a dataset with coverage $c=5$.

Fig.~\ref{fig:DNA_Medium} shows the evolution of classification accuracy over time for \mission{}, IHT, and the FH baseline. After 5 epochs, \mission{} closely matches the FH baseline. {\bf Note:} \emph{\mission{} converges faster than IHT} such that \mission{} is 1--4 rounds ahead of IHT, with the gap gradually increasing over time. On average, the running time of \mission{} is 1--2$\times$ slower than IHT. However, this experiment demonstrates that since \mission{} converges faster, it actually needs less time to reach a certain accuracy level. Therefore, \mission{} is effectively faster and more accurate than IHT.

\begin{table} [ht]
\vspace{-0.9in}
\begin{center}
\includegraphics[width=0.35\textwidth]{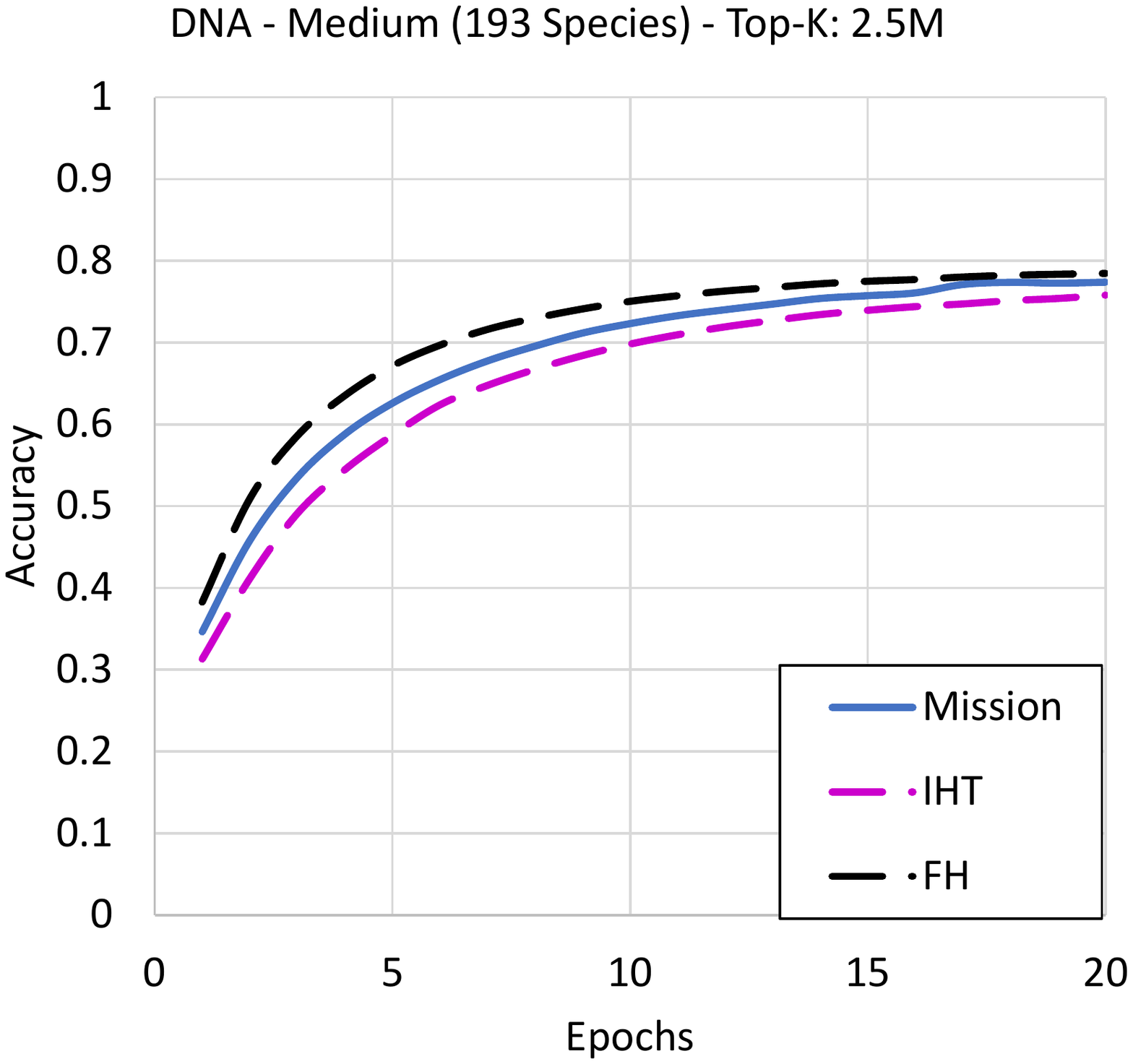}
\end{center}
\vspace{-0.9in}
\caption{Ultra Large-Scale Feature Selection for the DNA Metagenomics (Medium) Dataset (193 species), Mean Coverage $c=5$.}
\vspace{-0.5cm}
\label{fig:DNA_Medium}
\end{table}

\section{Implementation Details and Discussion}
\label{sec:discussion}
{\bf Scalability and Parallelism}:
IHT finds the top-$k$ features after each gradient update, which requires sorting the features based on their weights before thresholding. The speed of the sorting process is improved by using a heap data structure, but it is still costly per update. \mission{} also uses a heap to store its top-$k$ elements, but it achieves the same accuracy as IHT with far fewer top-$k$ elements because of the Count-Sketch. (Recall Section~\ref{sec:algorithm})

Another suggested improvement for the top-$k$ heap is to use lazy updates. Updating the weight of a feature does not change its position in the heap very often, but still requires an $O(\log n)$ operation. With lazy updates, the heap is updated only if it the change is significant. $|x_{t} - x_0| \ge \epsilon$, i.e. the new weight at time $t$ exceeds the original value by some threshold. This tweak significantly reduces the number of heap updates at the cost of slightly distorting the heap.

\section{Conclusion and Future Work}
In this paper, we presented \mission{}, a new framework for ultra large-scale feature selection that performs hard thresholding and SGD while maintaining an efficient, approximate representation for all features using a Count-Sketch data structure. \mission{} retains the simplicity of feature hashing without sacrificing the interpretability of the features.

Interaction features are important for scientific discovery with DNA Metagenomics \cite{Basu201711236}. Traditionally, the polynomial kernel trick enabled machine learning algorithms to explore this feature space implicitly without the exponential increase in dimensionality. However, this exponential cost is unavoidable with feature extraction. Going forward, we are interested in leveraging our \mission{} framework to explore pairwise or higher interaction features. 

\subsection*{Acknowledgements}

AAA, DL, GD, and RB were supported by the DOD Vannevar Bush Faculty Fellowship grant N00014-18-1-2047, NSF grant CCF-1527501, ARO grant W911NF-15-1-0316, AFOSR grant FA9550-14-1-0088, ONR grant N00014-17-1-2551, DARPA REVEAL grant HR0011-16-C-0028, and an ONR BRC grant for Randomized Numerical Linear Algebra. RS and AS were supported by NSF-1652131, AFOSR-YIP  FA9550-18-1-0152, and ONR BRC grant for Randomized Numerical Linear Algebra. The authors would also like to thank NVIDIA and Amazon for gifting computing resources. 

\balance
\bibliography{main_camera_ready.bib}

\begin{thebibliography}{20}
\providecommand{\natexlab}[1]{#1}
\providecommand{\url}[1]{\texttt{#1}}
\expandafter\ifx\csname urlstyle\endcsname\relax
  \providecommand{\doi}[1]{doi: #1}\else
  \providecommand{\doi}{doi: \begingroup \urlstyle{rm}\Url}\fi

\bibitem[Agarwal et~al.(2014)Agarwal, Chapelle, Dud{\'\i}k, and
  Langford]{agarwal2014reliable}
Agarwal, A., Chapelle, O., Dud{\'\i}k, M., and Langford, J.
\newblock A reliable effective terascale linear learning system.
\newblock \emph{Journal of Machine Learning Research}, 15\penalty0
  (1):\penalty0 1111--1133, 2014.

\bibitem[Aghazadeh et~al.(2016)Aghazadeh, Lin, Sheikh, Chen, Atkins, Johnson,
  Petrosino, Drezek, and Baraniuk]{aghazadeh2016universal}
Aghazadeh, A., Lin, A.~Y., Sheikh, M.~A., Chen, A.~L., Atkins, L.~M., Johnson,
  C.~L., Petrosino, J.~F., Drezek, R.~A., and Baraniuk, R.~G.
\newblock Universal microbial diagnostics using random dna probes.
\newblock \emph{Science advances}, 2\penalty0 (9):\penalty0 e1600025, 2016.

\bibitem[Basu et~al.(2018)Basu, Kumbier, Brown, and Yu]{Basu201711236}
Basu, S., Kumbier, K., Brown, J.~B., and Yu, B.
\newblock Iterative random forests to discover predictive and stable high-order
  interactions.
\newblock \emph{Proceedings of the National Academy of Sciences}, 2018.
\newblock ISSN 0027-8424.
\newblock \doi{10.1073/pnas.1711236115}.
\newblock URL \url{http://www.pnas.org/content/early/2018/01/17/1711236115}.

\bibitem[Blumensath \& Davies(2009)Blumensath and
  Davies]{blumensath2009iterative}
Blumensath, T. and Davies, M.~E.
\newblock Iterative hard thresholding for compressed sensing.
\newblock \emph{Applied and Computational Harmonic Analysis}, 27\penalty0
  (3):\penalty0 265--274, 2009.

\bibitem[Bray et~al.(2015)Bray, Pimentel, Melsted, and Pachter]{bray2015near}
Bray, N., Pimentel, H., Melsted, P., and Pachter, L.
\newblock Near-optimal {RNA}-{S}eq quantification.
\newblock \emph{arXiv preprint arXiv:1505.02710}, 2015.

\bibitem[Broder \& Mitzenmacher(2004)Broder and
  Mitzenmacher]{broder2004network}
Broder, A. and Mitzenmacher, M.
\newblock Network applications of bloom filters: A survey.
\newblock \emph{Internet mathematics}, 1\penalty0 (4):\penalty0 485--509, 2004.

\bibitem[Charikar et~al.(2002)Charikar, Chen, and
  Farach-Colton]{charikar2002finding}
Charikar, M., Chen, K., and Farach-Colton, M.
\newblock Finding frequent items in data streams.
\newblock In \emph{International Colloquium on Automata, Languages, and
  Programming}, pp.\  693--703. Springer, 2002.

\bibitem[Duchi et~al.(2008)Duchi, Shalev-Shwartz, Singer, and
  Chandra]{duchi2008efficient}
Duchi, J., Shalev-Shwartz, S., Singer, Y., and Chandra, T.
\newblock Efficient projections onto the {$\ell_1$}-ball for learning in high
  dimensions.
\newblock In \emph{Proceedings of the 25th International Conference on Machine
  Learning}, pp.\  272--279. ACM, 2008.

\bibitem[Indyk(2013)]{indyk2013sketching}
Indyk, P.
\newblock Sketching via hashing: {F}rom heavy hitters to compressed sensing to
  sparse fourier transform.
\newblock In \emph{Proceedings of the 32nd ACM SIGMOD-SIGACT-SIGAI Symposium on
  Principles of Database Systems}, pp.\  87--90. ACM, 2013.

\bibitem[Jain et~al.(2014)Jain, Tewari, and Kar]{jain2014iterative}
Jain, P., Tewari, A., and Kar, P.
\newblock On iterative hard thresholding methods for high-dimensional
  m-estimation.
\newblock In \emph{Advances in Neural Information Processing Systems}, pp.\
  685--693, 2014.

\bibitem[Jain et~al.(2017)Jain, Tewari, and Dhillon]{jain2017partial}
Jain, P., Tewari, A., and Dhillon, I.~S.
\newblock Partial hard thresholding.
\newblock \emph{IEEE Transactions on Information Theory}, 63\penalty0
  (5):\penalty0 3029--3038, 2017.

\bibitem[Langford et~al.(2009)Langford, Li, and Zhang]{langford2009sparse}
Langford, J., Li, L., and Zhang, T.
\newblock Sparse online learning via truncated gradient.
\newblock \emph{Journal of Machine Learning Research}, 10\penalty0
  (Mar):\penalty0 777--801, 2009.

\bibitem[Maleki(2009)]{maleki2009coherence}
Maleki, A.
\newblock Coherence analysis of iterative thresholding algorithms.
\newblock In \emph{Communication, Control, and Computing, 2009. Allerton 2009.
  47th Annual Allerton Conference on}, pp.\  236--243. IEEE, 2009.

\bibitem[McMahan et~al.(2013)McMahan, Holt, Sculley, Young, Ebner, Grady, Nie,
  Phillips, Davydov, Golovin, et~al.]{mcmahan2013ad}
McMahan, H.~B., Holt, G., Sculley, D., Young, M., Ebner, D., Grady, J., Nie,
  L., Phillips, T., Davydov, E., Golovin, D., et~al.
\newblock Ad click prediction: {A} view from the trenches.
\newblock In \emph{Proceedings of the 19th ACM SIGKDD International Conference
  on Knowledge Discovery and Data Mining}, pp.\  1222--1230. ACM, 2013.

\bibitem[Mikolov et~al.(2013)Mikolov, Chen, Corrado, and
  Dean]{mikolov2013efficient}
Mikolov, T., Chen, K., Corrado, G., and Dean, J.
\newblock Efficient estimation of word representations in vector space.
\newblock \emph{arXiv preprint arXiv:1301.3781}, 2013.

\bibitem[Shalev-Shwartz \& Tewari(2011)Shalev-Shwartz and
  Tewari]{shalev2011stochastic}
Shalev-Shwartz, S. and Tewari, A.
\newblock Stochastic methods for {$\ell_1$}-regularized loss minimization.
\newblock \emph{Journal of Machine Learning Research}, 12\penalty0
  (6):\penalty0 1865--1892, 2011.

\bibitem[Tan et~al.(2014)Tan, Tsang, and Wang]{tan2014towards}
Tan, M., Tsang, I.~W., and Wang, L.
\newblock Towards ultrahigh dimensional feature selection for big data.
\newblock \emph{Journal of Machine Learning Research}, 15\penalty0
  (1):\penalty0 1371--1429, 2014.

\bibitem[Vervier et~al.(2016)Vervier, Mah{\'e}, Tournoud, Veyrieras, and
  Vert]{vervier2016large}
Vervier, K., Mah{\'e}, P., Tournoud, M., Veyrieras, J.-B., and Vert, J.-P.
\newblock Large-scale machine learning for metagenomics sequence
  classification.
\newblock \emph{Bioinformatics}, 32\penalty0 (7):\penalty0 1023--1032, 2016.

\bibitem[Weinberger et~al.(2009)Weinberger, Dasgupta, Langford, Smola, and
  Attenberg]{weinberger2009feature}
Weinberger, K., Dasgupta, A., Langford, J., Smola, A., and Attenberg, J.
\newblock Feature hashing for large scale multitask learning.
\newblock In \emph{Proceedings of the 26th Annual International Conference on
  Machine Learning}, pp.\  1113--1120. ACM, 2009.

\bibitem[Wood \& Salzberg(2014)Wood and Salzberg]{wood2014kraken}
Wood, D.~E. and Salzberg, S.~L.
\newblock Kraken: {U}ltrafast metagenomic sequence classification using exact
  alignments.
\newblock \emph{Genome Biology}, 15\penalty0 (3):\penalty0 1, 2014.

\end{thebibliography}
\bibliographystyle{icml2018}
\clearpage

\section{Appenix}

\newcommand{\ve}{\varepsilon}
\newcommand{\vphi}{\varphi}
In this appendix, we present some preliminary results on the convergence of \mission{}. For the sake of exposition, we will consider the full-gradient descent version of \mission{}, and we will prove that the iterates converge geometrically upto a small additive error. In order to establish this proof, we make an assumption (Assumption~\ref{ass:1}) about the hashing scheme; see Section~\ref{sec:assumption} for more on this.  

We begin by establishing some notation. We will assume that the data satisfies the following linear model: 
\begin{equation}
	y = X \beta^\ast + w, 
\end{equation}
where $y\in \mathbb{R}^n$ is the vector of observation, $X\in \mathbb{R}^{n\times p}$ is the data matrix, $w\in \mathbb{R}^n$ is the noise vector, and $\beta^\ast\in \mathbb{R}^p$ is the unknown $k-$sparse regression vector. We will let $\psi$ and $\vphi$ respectively denote the hashing and the (top-$k$) heavy-hitters operation.  We will let $\beta^t$ denote the output of \mission{} in step $t$. In general, we will let the vector $h\in \mathbb{R}^m$ denote the hash table. Finally, as before, we will let $H_k$ denote the projection operation onto the set of all $k-$sparse vectors. We will make the following assumption about the hashing mechanism: 
\begin{assumption}
\label{ass:1}
	For any $h\in \mathbb{R}^m$, there exists an $\beta_h\in \mathbb{R}^p$ such that the following hold
\begin{enumerate}
\item $\psi(\beta_h) = h$, that is, the hash table contents can be set to $h$ by hashing the vector $\beta_h$. 
\item $\left\| \beta_h - H_k(\beta_h) \right\|_2 \leq \varepsilon_1$
\end{enumerate}
\end{assumption}

This assumption requires the hashing algorithm to be such that there exists a nearly sparse vector that can reproduce any state of the hash table exactly. This is reasonable since the hash table is a near optimal ``code'' for  sparse vectors in $\mathbb{R}^p$.  See Section~\ref{sec:assumption} for more on this. 

We will next state a straightforward lemma about the sketching procedure
\begin{lemma}
\label{lemma:count-sketch-works}
	There exist constants $\ve_2, C_1>0$ such that provided that the size $m$ of the hash table satisfies $m\geq C_1k \log^2 p$, the following holds for any $\beta\in \mathbb{R}^p$ with probability at least $1-\delta_1$: 
\begin{align}
\left\| \vphi(\psi(\beta)) - H_k(\beta) \right\|_2 \leq \varepsilon_2
\end{align}
\end{lemma}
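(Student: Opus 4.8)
The plan is to derive the lemma from the per-coordinate Count-Sketch guarantee (\ref{eq:CS_accu}) combined with a standard support-swap argument. First I would record a \emph{uniform} version of (\ref{eq:CS_accu}): taking the width of the sketch proportional to $k$ and the number of hash rows $d = \Theta(\log p)$, the median point-query estimator fails at any fixed coordinate with probability at most $p^{-c}$; a union bound over all $p$ coordinates (and, if one also wants the guarantee to hold across the $\mathrm{poly}(p)$ iterations of \mission{}, over those too) then shows that with probability at least $1-\delta_1$ the queried vector $\widehat\beta := \psi(\beta)$, read coordinatewise, satisfies $|\widehat\beta_i - \beta_i| \le \eta$ for \emph{all} $i\in[p]$ simultaneously, where $\eta = O(\|\beta - H_k(\beta)\|_2/\sqrt{k})$. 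This is exactly where the second logarithmic factor enters relative to the basic $\mathcal{O}(\log p)$ sketch of Section~\ref{sec:review_CS}, forcing $m \gtrsim C_1 k \log^2 p$.

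Next I would compare the support $T$ of $H_k(\beta)$ (the $k$ coordinates of $\beta$ of largest magnitude) with the support $\widehat T$ of $\varphi(\widehat\beta)$ (the $k$ coordinates of the estimate of largest magnitude), and decompose
\[
\bigl\| \varphi(\psi(\beta)) - H_k(\beta) \bigr\|_2^2
= \sum_{i \in T \cap \widehat T} (\widehat\beta_i - \beta_i)^2
\;+\; \sum_{i \in \widehat T \setminus T} \widehat\beta_i^{\,2}
\;+\; \sum_{i \in T \setminus \widehat T} \beta_i^{\,2}.
\]
The first sum is at most $k\eta^2$ by the uniform bound. For the two swap sums, observe $|T\setminus\widehat T| = |\widehat T\setminus T| =: s \le k$, and that for every $i \in T\setminus\widehat T$ and $j \in \widehat T\setminus T$ we have $|\widehat\beta_j| \ge |\widehat\beta_i|$ (else $\varphi$ would have kept $i$ over $j$), hence $|\beta_j| \ge |\widehat\beta_j| - \eta \ge |\widehat\beta_i| - \eta \ge |\beta_i| - 2\eta$; since $j\notin T$, $|\beta_j|$ is at most the $k$-th largest magnitude of $\beta$, which is $\le |\beta_i|$ because $i\in T$. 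Thus every swapped-in and swapped-out coordinate has true magnitude within $O(\eta)$ of the $k$-th order statistic of $\beta$, so both swap sums are $O(k\eta^2)$. Summing gives $\bigl\|\varphi(\psi(\beta)) - H_k(\beta)\bigr\|_2 \le C\sqrt{k}\,\eta =: \varepsilon_2$, with $C_1$ chosen large enough to make $\eta$ (equivalently $\varepsilon_2$) as small as desired.

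The main obstacle is the bookkeeping in the support-swap step: one must argue carefully that \emph{both} the coordinates the query wrongly includes and those it wrongly drops all lie within $O(\eta)$ of the threshold, while handling ties and the fact that $\varphi$ thresholds the noisy $\widehat\beta$ rather than $\beta$ itself. Everything else — the uniform point-query bound and the final summation — is routine. A secondary subtlety worth flagging is that the constant $\varepsilon_2$ genuinely scales with the tail $\|\beta - H_k(\beta)\|_2$ and with $1/\sqrt{w}$, so that enlarging $C_1$ (hence the sketch width) is what drives $\varepsilon_2$ down; the cleanest statement either makes this dependence explicit or imposes a normalization on $\beta$.
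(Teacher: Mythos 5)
The paper itself offers no proof of this lemma (it is simply asserted to ``follow directly from the definition of the Count-Sketch''), so your instinct to supply a real argument is sound, and most of your architecture is the standard and correct one: a uniform point-query bound $|\widehat{\beta}_i - \beta_i|\le \eta$ with $\eta = O(\|\beta - H_k(\beta)\|_2/\sqrt{k})$ via a union bound over coordinates, followed by the exact three-way decomposition of $\|\varphi(\psi(\beta)) - H_k(\beta)\|_2^2$ over $T\cap\widehat{T}$, $\widehat{T}\setminus T$, and $T\setminus\widehat{T}$. Your closing caveat --- that $\varepsilon_2$ must scale with the tail $\|\beta-H_k(\beta)\|_2$ --- is also correct, and is really a defect of the lemma as stated: by homogeneity, no absolute constant $\varepsilon_2$ can hold for all $\beta\in\mathbb{R}^p$.

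The genuine gap is the swap-sum step. From ``every swapped-in and swapped-out coordinate has true magnitude within $O(\eta)$ of the $k$-th order statistic'' you conclude that both swap sums are $O(k\eta^2)$; that inference requires the $k$-th order statistic itself to be $O(\eta)$, which is false in general, and the dropped and wrongly-included contributions cannot cancel because they sit on disjoint coordinates. Concretely, let $\beta$ have $k+1$ entries equal to $M$ and the rest zero: the noisy estimates may legitimately rank a different $k$-subset on top than $H_k$ does, giving $\|\varphi(\psi(\beta)) - H_k(\beta)\|_2 \approx M\sqrt{2}$ no matter how large the sketch is. What your pairing argument actually proves is $\sum_{i\in T\setminus\widehat{T}}\beta_i^2 \le 2\sum_{j\in\widehat{T}\setminus T}\beta_j^2 + O(k\eta^2) \le 2\|\beta - H_k(\beta)\|_2^2 + O(k\eta^2)$ (using $\widehat{T}\setminus T\subseteq T^c$), and similarly for the other swap sum, so the provable conclusion is the tail-relative bound $\|\varphi(\psi(\beta)) - H_k(\beta)\|_2 \le C\,\|\beta - H_k(\beta)\|_2$ with an absolute constant $C$ bounded away from zero; enlarging $C_1$ shrinks only the $k\eta^2$ contribution, not this constant. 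The lemma should therefore be restated in tail-relative form (or with an added separation assumption at the $k$-th largest entry), and the additive $3\varepsilon_2$ in the main theorem inherits that dependence on the tail energy of the hashed iterate rather than being a freely tunable constant.
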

This lemma follows directly from the definition of the Count-Sketch, and we will not prove here. 

We next state the main theorem that we will show. 

\begin{theorem}
\label{thm:main}
	For any $\delta\in \left(0,\frac{1}{3}\right)$ and $\rho \in (0,0.5)$, there is a constant $C>0$ such that the following statement holds with probability at least $1 - 3 \delta$ 
\begin{align}
\left\| \beta^{t+1} - \beta^\ast \right\|_2 &\leq 2 \rho \left\| \beta^t - \beta^\ast \right\|_2 + 2\sqrt{\frac{\sigma_w^2 (1+\mu) k \log p}{n}}\nonumber\\
&\qquad\qquad + 2 \ve_1 + 3\ve_2, 
\end{align}
provided that $n > C k \log p$, $m> Ck \log^2 p$,  and that Assumption~\ref{ass:1} holds. 
\end{theorem}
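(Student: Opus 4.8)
The plan is to unroll the sketch dynamics, reduce the recursion to a standard iterative-hard-thresholding argument, and pay a controlled additive price for the two approximation steps supplied by Assumption~\ref{ass:1} and Lemma~\ref{lemma:count-sketch-works}. Write $h^t$ for the hash table after step $t$; in the full-gradient version the increment added to the sketch is $g^t = 2\lambda X^\top(y - X\beta^t)$. By Assumption~\ref{ass:1} there is a vector $\beta_{h^t}$ with $\psi(\beta_{h^t}) = h^t$ and $\|\beta_{h^t} - H_k(\beta_{h^t})\|_2 \le \ve_1$, and since the Count-Sketch is linear, $h^{t+1} = h^t + \psi(g^t) = \psi(\beta_{h^t} + g^t)$, so that $\beta^{t+1} = \vphi\big(\psi(\beta_{h^t}+g^t)\big)$. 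Applying Lemma~\ref{lemma:count-sketch-works} to $\beta_{h^t}+g^t$ gives, on an event of probability at least $1-\delta$,
\begin{align}
\big\| \beta^{t+1} - H_k(\beta_{h^t}+g^t) \big\|_2 \le \ve_2 . \nonumber
\end{align}

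Next I would peel off the thresholding operator. Since $\beta^\ast$ is $k$-sparse, the best-$k$-term property of $H_k$ yields $\|H_k(u)-\beta^\ast\|_2 \le 2\|(u-\beta^\ast)_T\|_2$ with $T := \mathrm{supp}(\beta^\ast)\cup\mathrm{supp}(H_k(u))\cup\mathrm{supp}(H_k(\beta_{h^t}))$, $|T|\le 3k$; taking $u=\beta_{h^t}+g^t$ gives $\|\beta^{t+1}-\beta^\ast\|_2\le \ve_2 + 2\|(\beta_{h^t}+g^t-\beta^\ast)_T\|_2$. I would then swap $\beta_{h^t}$ for the true iterate $\beta^t$: decomposing $\beta_{h^t}-\beta^t = \big(\beta_{h^t}-H_k(\beta_{h^t})\big) + \big(H_k(\beta_{h^t})-\vphi(\psi(\beta_{h^t}))\big)$ and using the second part of Assumption~\ref{ass:1} together with a second application of Lemma~\ref{lemma:count-sketch-works} to $\beta_{h^t}$ (another $\delta$ of failure probability) gives $\|\beta_{h^t}-\beta^t\|_2\le\ve_1+\ve_2$, hence
\begin{align}
\big\| \beta^{t+1}-\beta^\ast \big\|_2 \le 3\ve_2 + 2\ve_1 + 2\big\|\big(\beta^t+g^t-\beta^\ast\big)_T\big\|_2 . \nonumber
\end{align}

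The remaining term is the classical IHT step. Substituting $y=X\beta^\ast+w$ into $g^t$ gives $\beta^t+g^t-\beta^\ast = (I-2\lambda X^\top X)(\beta^t-\beta^\ast) + 2\lambda X^\top w$. Since $\beta^t-\beta^\ast$ is $2k$-sparse and $|T|\le 3k$, the restriction $\big((I-2\lambda X^\top X)(\beta^t-\beta^\ast)\big)_T$ only involves a submatrix of $X^\top X$ indexed by $O(k)$ coordinates; choosing $\lambda$ appropriately and invoking the restricted isometry property of the Gaussian $X$ --- valid with probability at least $1-\delta$ once $n>Ck\log p$ --- bounds this by $\rho\|\beta^t-\beta^\ast\|_2$. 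A Gaussian tail bound plus a union bound over the $\binom{p}{3k}$ possible sets $T$ bounds $\|(2\lambda X^\top w)_T\|_2$ by $\sqrt{\sigma_w^2(1+\mu)k\log p/n}$ on the same event. Substituting and taking a union bound over the three failure events yields exactly the claimed recursion.

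The step I expect to cause the most trouble is the contraction $\big\|\big((I-2\lambda X^\top X)v\big)_T\big\|_2\le\rho\|v\|_2$: making it rigorous means simultaneously controlling every support that enters the analysis ($\mathrm{supp}(\beta^t)$, $\mathrm{supp}(\beta^\ast)$, $\mathrm{supp}(\beta^{t+1})$, $\mathrm{supp}(H_k(\beta_{h^t}))$) so that their union has size $O(k)$, pinning down the admissible learning rates $\lambda$ and the constant $\mu$, and arguing that the once-and-for-all draw of $X$ is compatible with the union bound over supports. A related, softer gap is that $\beta_{h^t}+g^t$ and $\beta_{h^t}$ depend on the hashing randomness, so Lemma~\ref{lemma:count-sketch-works} must be applied with some care about adaptivity. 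By comparison, the Gaussian noise concentration and the bookkeeping that turns the accumulated errors into precisely $2\rho$, $2\ve_1$ and $3\ve_2$ are routine.
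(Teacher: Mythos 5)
Your proposal is correct in outline and reaches the theorem with the same constants, but the core of your argument is genuinely different from the paper's. The opening moves coincide: you both exploit linearity of the sketch to write $\beta^{t+1}=\varphi\big(\psi(\tilde{\beta}^t+g^t)\big)$ with $\psi(\tilde{\beta}^t)=h^t$ from Assumption~\ref{ass:1}, invoke Lemma~\ref{lemma:count-sketch-works} to pay $\varepsilon_2$ for replacing $\varphi\circ\psi$ by $H_k$, and swap $\tilde{\beta}^t$ for $\beta^t$ at a cost of $\varepsilon_1+\varepsilon_2$. The divergence is in how the thresholding step is converted into a contraction. The paper uses the projection property of $H_k$ (that $H_k(\gamma^{t+1})$ is at least as close to $\gamma^{t+1}$ as the $k$-sparse $\beta^\ast$ is), expands the square to get $\left\| H_k(\gamma^{t+1})-\beta^\ast \right\|_2^2\leq 2\langle H_k(\gamma^{t+1})-\beta^\ast,\gamma^{t+1}-\beta^\ast\rangle$, and then bounds the inner product directly via Lemma~\ref{lemma:inner-product-concentration} applied to the pair of $O(k)$-sparse vectors $H_k(\gamma^{t+1})-\beta^\ast$ and $\beta^t-\beta^\ast$ (with $\eta=1$), plus Lemma~\ref{lemma:another-inner-product} for the noise; dividing by $\left\|H_k(\gamma^{t+1})-\beta^\ast\right\|_2$ gives the recursion with no explicit bookkeeping of supports. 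You instead run the classical Blumensath--Davies-style argument: the best-$k$-term inequality $\left\|H_k(u)-\beta^\ast\right\|_2\leq 2\left\|(u-\beta^\ast)_T\right\|_2$ over a union of supports $T$ of size $O(k)$, followed by a restricted-isometry bound on $\big\|\big((I-2\lambda X^\top X)v\big)_T\big\|_2$ and a union bound over $\binom{p}{3k}$ supports for the noise. Both routes are sound and the constants ($2\rho$, $2\varepsilon_1$, $3\varepsilon_2$) come out the same; what the paper's route buys is that it only needs the already-stated Lemmas~\ref{lemma:inner-product-concentration} and~\ref{lemma:another-inner-product} and sidesteps the support-tracking you correctly flag as the delicate step, whereas your route would require you to state and prove the RIP-over-support-unions fact (routine for Gaussian $X$, but not among the paper's lemmas). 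One point in your favor: you explicitly note that Lemma~\ref{lemma:count-sketch-works} is being applied to a vector that depends on the hashing randomness; the paper's proof has the same adaptivity issue and passes over it silently.
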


Notice that since $\rho <0.5$, the above theorem guarantees geometric convergence. This implies that the overall error is of the order of the additive constants $\ve_1$ and $\ve_2$.  

Before we prove this theorem, we will collect some lemmas that will help us prove our result. 

\begin{lemma}
	\label{lemma:inner-product-concentration}
Suppose $X\in \mathbb{R}^{n \times p}$ has i.i.d $\mathcal{N}(0,\frac{1}{n})$ entries. Then for constants $\rho, \delta_2>0$, there exists a constant  $C_2(\delta)>0$ such that if $n \geq C_2 k\log p$ such that for any pair of unit-norm $k-$sparse vectors $\beta_1, \beta_2 \in \mathbb{S}^{p-1}$, the following holds with probability at least $1- \delta_2$. 
\begin{equation}
	\left| \langle X \beta_1, X \beta_2\rangle  - \langle \beta_1, \beta_2 \rangle\right| \leq \rho.
\end{equation}
\end{lemma}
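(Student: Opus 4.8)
The plan is to recognize this as a restricted inner-product estimate and to derive it from a restricted isometry property (RIP) for the Gaussian matrix $X$ on $2k$-sparse vectors, following the classical net-based argument. First I would observe that $\mathbb{E}[X^\top X] = I_p$ (since the columns have i.i.d.\ $\mathcal{N}(0,1/n)$ entries, each diagonal entry of $X^\top X$ has mean $1$ and each off-diagonal has mean $0$), so that $\mathbb{E}\langle X\beta_1, X\beta_2\rangle = \langle \beta_1, \beta_2\rangle$ and the claim is purely a concentration statement. I would then reduce the bilinear estimate to a quadratic one by polarization:
\[
\langle X\beta_1, X\beta_2\rangle - \langle \beta_1,\beta_2\rangle = \tfrac14\big[(\|X(\beta_1+\beta_2)\|_2^2 - \|\beta_1+\beta_2\|_2^2) - (\|X(\beta_1-\beta_2)\|_2^2 - \|\beta_1-\beta_2\|_2^2)\big].
\]
Since $\beta_1,\beta_2$ are unit-norm and $k$-sparse, $\beta_1\pm\beta_2$ are $2k$-sparse with $\|\beta_1\pm\beta_2\|_2^2 \le 4$, so a uniform RIP bound $\big|\|Xw\|_2^2 - \|w\|_2^2\big| \le \tfrac{\rho}{4}\|w\|_2^2$ over all $2k$-sparse $w$ immediately yields the target bound $\rho$.

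Second, I would establish the RIP. For a fixed $2k$-sparse unit vector $w$, the random variable $n\|Xw\|_2^2$ is $\chi^2_n$-distributed, so the Laurent--Massart tail bound gives $\mathbb{P}\big(\big|\|Xw\|_2^2 - 1\big| > t\big) \le 2\exp(-c_0 n t^2)$ for $t \in (0,1)$ and an absolute constant $c_0$. To upgrade this to a statement uniform over the low-entropy set of $2k$-sparse unit vectors, I would use a covering argument: for each of the $\binom{p}{2k}$ candidate supports $S$, take an $\eta$-net $N_S$ of the unit sphere of $\mathbb{R}^S$ of cardinality at most $(1+2/\eta)^{2k}$, and union-bound the chi-squared tail over the $\le \binom{p}{2k}(1+2/\eta)^{2k}$ net points. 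The logarithm of this count is $O(k\log p)$, so choosing $n \ge C_2 k\log p$ with $C_2$ depending on $\rho$ and $\delta_2$ forces $c_0 n t^2$ to dominate $k\log p + \log(1/\delta_2)$, controlling all net points simultaneously with probability at least $1-\delta_2$.

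Third, the net control must be extended to the full sphere. Writing $\delta^\ast := \sup_w \big|\|Xw\|_2^2 - 1\big|$ over $2k$-sparse unit $w$, and decomposing any such $w = q + (w-q)$ with $q$ the nearest net point (so $\|w-q\|_2 \le \eta$ and $w-q$ shares the support $S$), the triangle inequality $\|Xw\|_2 \le \|Xq\|_2 + \|X(w-q)\|_2$ together with the supremum definition gives the self-referential bound $\sqrt{1+\delta^\ast} \le \sqrt{1+\tau} + \eta\sqrt{1+\delta^\ast}$, where $\tau$ is the net accuracy; solving yields $\delta^\ast \le \rho/4$ for a suitably chosen $\eta$ (e.g.\ $\eta = 1/4$) and small $\tau$. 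Combining this RIP with the polarization step of the first paragraph completes the proof.

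I expect the main obstacle to be this passage from pointwise concentration to uniform control over all $2k$-sparse vectors, i.e.\ correctly balancing the metric entropy $O(k\log p)$ of the sparse sphere against the sub-exponential deviation rate, and then closing the net-to-sphere extension via the self-referential inequality. If instead the lemma were read as a per-pair statement (the vectors $\beta_1,\beta_2$ fixed before drawing $X$), the covering step would be unnecessary and the bound would follow from a single chi-squared tail; but the stated sample complexity $n \gtrsim k\log p$ signals the uniform reading, which is precisely what the downstream convergence argument requires.
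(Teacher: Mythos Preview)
Your proposal is correct and follows the same high-level strategy as the paper: establish pointwise concentration for fixed vectors, then upgrade to a uniform statement via a covering argument over $k$-sparse unit vectors, which is what produces the $n \gtrsim k\log p$ requirement. The paper's proof is only a sketch and differs in one detail: rather than polarizing to reduce the bilinear form to a quadratic one and invoking chi-squared (Laurent--Massart) tails, the paper applies a Chernoff bound directly to $\langle X\beta_1, X\beta_2\rangle$, viewing it as a sum over rows of products of independent Gaussians. Your detour through the RIP is the standard Baraniuk--Davenport--DeVore--Wakin route and has the advantage that the pointwise tail becomes an exact chi-squared bound and the net-to-sphere extension is textbook; the paper's direct bilinear route avoids the polarization step but requires handling a slightly less clean sub-exponential random variable. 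Both yield the same conclusion with the same sample complexity, and your version is in fact more fully worked out than the paper's.
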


\begin{proof}
	Note that $\mathbb{E}[\langle X\beta_1, X \beta_2\rangle] = \langle \beta_1, \beta_2\rangle$. For a fixed pair of $\beta_1, \beta_2$, the proof follows from a standard Chernoff bound argument after observing that $\langle X\beta_1, X\beta_2\rangle$ can be written as a sum of products of independent Gaussian random variables. The rest of the proof follows from a standard covering argument, which gives the requirement on $n$. 
\end{proof}

\begin{lemma}
	\label{lemma:another-inner-product}
Suppose $X$ has i.i.d entries drawn according to $\mathcal{N}(0,n^{-1})$, and $w\sim \mathcal{N}(0, \sigma_w^2I_n)$ is drawn independently of $X$. Then, for any constant $\delta_3>0$, there are constants $C_3, \mu>0$ such that for all unit norm $k-$sparse $\beta\in \mathbb{S}^{p-1}$, the following holds with probability at least $ 1- \delta_3$:
\begin{align}
\langle \beta, X^T w\rangle \leq \sqrt{\frac{\sigma_w^2(1+\mu)k \log p}{n}}
\end{align}
provided $n\geq C_3 k \log p$. 
\end{lemma}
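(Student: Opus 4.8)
The plan is to exploit the bilinear structure of the quantity. Writing $\langle \beta, X^T w\rangle = \langle X\beta, w\rangle$, the cleanest route is to \emph{condition on} $X$ and use the Gaussianity of $w$. Conditioned on $X$, the scalar $\langle X\beta, w\rangle$ is a linear functional of $w\sim\mathcal{N}(0,\sigma_w^2 I_n)$ and is therefore a centered Gaussian with variance $\sigma_w^2\|X\beta\|_2^2$. A standard one-sided Gaussian tail bound then gives, for each fixed $k$-sparse unit vector $\beta$ and any $t>0$,
\[
\Pr\!\left[\langle X\beta, w\rangle > t \mid X\right] \le \exp\!\left(-\frac{t^2}{2\sigma_w^2\|X\beta\|_2^2}\right).
\]
An alternative, matching the style of the proof of Lemma~\ref{lemma:inner-product-concentration}, is to write $\langle X\beta,w\rangle=\sum_i (X\beta)_i\, w_i$ as a sum of independent products of Gaussians and apply a Bernstein/Chernoff bound; I would keep the conditioning argument as the primary device since it isolates the two sources of randomness.

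Next I would remove the dependence on the random conditional variance $\|X\beta\|_2^2$. Because $X$ has $\mathcal{N}(0,n^{-1})$ entries, $\mathbb{E}\|X\beta\|_2^2=\|\beta\|_2^2=1$, and a restricted-isometry-type bound---obtainable from exactly the Chernoff-plus-covering machinery already invoked for Lemma~\ref{lemma:inner-product-concentration}---guarantees that, provided $n\ge C_3 k\log p$, the event $\mathcal{E}:=\{\|X\beta\|_2^2\le 1+\mu \text{ for all $k$-sparse unit } \beta\}$ holds with probability at least $1-\delta_3/2$ for a suitable constant $\mu>0$. On $\mathcal{E}$ the conditional tail bound becomes uniform: the exponent is controlled by $2\sigma_w^2(1+\mu)$ independently of $\beta$.

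Finally I would upgrade the pointwise tail bound to a uniform statement over all $k$-sparse unit vectors. The set of such vectors is a union over the $\binom{p}{k}$ supports of the corresponding unit spheres in $\mathbb{R}^k$; building an $\varepsilon$-net of size $(3/\varepsilon)^k$ on each sphere yields a net $\mathcal{N}$ of cardinality $\log|\mathcal{N}|\lesssim k\log(ep/k)+k\log(3/\varepsilon)=O(k\log p)$. A union bound over $\mathcal{N}$ of the conditional Gaussian tail, combined with a Lipschitz/perturbation argument to pass from the net back to the full sphere (using $\|X\beta\|_2\le\sqrt{1+\mu}$ on $\mathcal{E}$ to bound the discretization error), then lets me take the threshold $t$ of order $\sqrt{k\log p}$ times the uniform conditional standard deviation $\sigma_w\sqrt{1+\mu}$, so that the union-bound failure probability is absorbed into the remaining $\delta_3/2$ budget; collecting both budgets recovers a bound of the claimed form.

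The main obstacle is the uniformity over the \emph{uncountable} family of $k$-sparse unit vectors: a naive union bound fails, so the argument must simultaneously (i) control the conditional variance $\|X\beta\|_2^2$ for \emph{all} $\beta$ on the event $\mathcal{E}$, and (ii) discretize the sphere and extend back via continuity, all while keeping the two independent randomness sources $X$ and $w$ cleanly separated through the conditioning. The hypothesis $n\ge C_3 k\log p$ is precisely what keeps the concentration in the favorable sub-Gaussian regime, where the net cardinality $e^{O(k\log p)}$ can be overcome by the Gaussian tail.
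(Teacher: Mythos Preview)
Your argument is correct and follows the same overall template as the paper---a Gaussian tail bound for fixed $\beta$ followed by a covering argument over $k$-sparse unit vectors---but you condition on the opposite source of randomness. The paper effectively conditions on $w$: it uses the distributional identity $\langle X\beta, w\rangle \stackrel{d}{=} n^{-1/2}\|w\|_2\,\langle \beta, w_2\rangle$ with $w_2$ a fresh standard Gaussian independent of $w$, then controls the scalar $\|w\|_2^2$ once by chi-squared concentration and handles $\langle \beta, w_2\rangle$ by a Gaussian tail plus covering. You instead condition on $X$ and must control $\|X\beta\|_2^2$ \emph{uniformly} over all $k$-sparse unit $\beta$ via an RIP-type event $\mathcal{E}$ before invoking the Gaussian tail in $w$. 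The paper's route is marginally more economical, since the $\|w\|_2$ bound is a single scalar estimate that does not depend on $\beta$ and therefore needs no covering of its own, whereas your event $\mathcal{E}$ already requires a net argument; on the other hand, your decomposition keeps the two randomness sources more cleanly separated and directly reuses the RIP-style machinery from Lemma~\ref{lemma:inner-product-concentration}. Either route yields the same bound and the same sample-size requirement $n\gtrsim k\log p$.
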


\begin{proof}
Notice that for a fixed $\beta$, $\langle \beta, X^T w\rangle = \langle X\beta, w\rangle$ has the same distribution as $\frac{1}{\sqrt{n}} \left\| w \right\|_2 \langle\beta, w_2\rangle$, where $w_2 \sim \mathcal{N}(0, I_n)$ is independent of $w$. Now, we can use concentration inequalities of chi-squared random variables to show that there is a constant $C_3'>0$
\begin{align}
\mathbb{P}\left[ \left\| w \right\|_2^2 \geq \sigma_w^2(1 + \mu_1) n \right] \leq e^{-C_3'n}.
\end{align}

Similarly, from chi-squared concentration, there is a constant $C_3''>0$
\begin{align}
\mathbb{P}\left[ \left|\langle \beta, w_2\rangle\right|^2  \geq 1 + \mu_2\right] \leq e^{-C_3''}
\end{align}

Now, with a standard covering argument, we know that there is a constant $C_3'''>0$ such that provided $n > C_3''' k \log p$, the following holds for at least $ 1- \delta_3$ for any $k-$sparse $\beta$:
\begin{align*}
\langle \beta, A^T w\rangle &= \langle A \beta, w\rangle\\
&\leq \sqrt{\frac{\sigma_w^2(1+ \mu) n k \log p }{n}}.
\end{align*}

\end{proof}

\paragraph{Proof of Theorem~\ref{thm:main}}
If we let $h^t$ denote the contents of the hash table at round $t$, notice that we have the following: $x^{t+1} = \varphi(h^{t+1})$. The (full gradient descent version of the) \mission{} algorithm proceeds by updating the hash table with hashes of the gradient updates. Therefore, we have the following relationship: 
\begin{align}
h^{t+1} = h^t + \psi\left( \eta X^TX(\beta^\ast - \beta^t) + X^T w\right)\label{eq:mission-hash-update},
\end{align}
where $\beta^t$ is the output of the algorithm at round $t$. Notice that $\beta^t = \vphi(h^t)$. According to Assumption~\ref{ass:1}, we know that there exists a vector $\tilde{\beta}^t$ such that $\psi(\tilde{\beta}^t) = h^t$. We will use this observation next. Notice that the output of round $t+1$ maybe written as follows:
\begin{align*}
\beta^{t+1} &= \varphi\left( h^t   + \psi\left( \eta X^TX(\beta^\ast - \beta^t) + X^T w\right)\right)\\
&= \varphi\left( \psi\left( \tilde{\beta}^t + \eta X^T X (\beta^\ast - \beta^t) + X^T w\right) \right).
\end{align*}

Now, we will estimate how close the output of the algorithm gets to $\beta^\ast$ in round $t+1$ in terms of how close the algorithm got in round $t$. Notice that 
\begin{align}
&\left\| \beta^{t+1} - \beta^\ast \right\|_2\nonumber\\
 &= \left\|  \varphi\left( \psi\left( \tilde{\beta}^t + \eta X^T X (\beta^\ast - \beta^t) + X^T w\right) \right) - \beta^\ast\right\|_2\nonumber\\
&\leq \left\| H_k\left(\tilde{\beta}^t + \eta X^T X (\beta^\ast - \beta^t) + X^T w\right) - \beta^\ast\right\|_2 + \varepsilon_2,\label{eq:proof-first-step}
\end{align}
which follows from Lemma~\ref{lemma:count-sketch-works}. We will next consider the first term from above. For notational ease, we will set 
$\gamma^{t+1} \triangleq\tilde{\beta}^t + \eta X^T X (\beta^\ast - \beta^t) + X^T w$. Observe that $H_k$ is an orthogonal projection operator, and that $\beta^\ast$ is $k-$sparse, therefore we have that 
\begin{align}
\left\| H_k\left(\gamma^{t+1}\right) -  \gamma^{t+1}\right\|_2^2 \leq \left\| \gamma^{t+1}  - \beta^\ast\right\|_2^2.
\end{align} 
Adding and subtracting $\beta^\ast$ on the left side and cancelling out the common terms, we have the following. 

\begin{align}
&\left\| H_k(\gamma^{t+1}) - \beta^\ast \right\|_2^2\nonumber\\
 &\leq 2 \langle H_k(\gamma^{t+1}) - \beta^\ast, \gamma^{t+1} - \beta^\ast\rangle\nonumber\\
&= 2\langle H_k (\gamma^{t+1}) - \beta^\ast, \tilde{\beta}^t + \eta X^T X (\beta^\ast - \beta^t) + X^T w - \beta^\ast  \rangle\nonumber\\
&= 2\langle H_k (\gamma^{t+1}) - \beta^\ast, \beta^t + \eta X^T X (\beta^\ast - \beta^t) + X^T w - \beta^\ast  \rangle\nonumber\\
& \qquad\qquad+ 2\langle  H_k (\gamma^{t+1}) - \beta^\ast, \beta^t - \tilde{\beta}^t \rangle\nonumber\\
&\stackrel{(a)}{\leq} 2\langle H_k (\gamma^{t+1}) - \beta^\ast, \beta^t + \eta X^T X (\beta^\ast - \beta^t) + X^T w - \beta^\ast  \rangle\nonumber\\
&\qquad\qquad + 2\left\|  H_k (\gamma^{t+1}) - \beta^\ast\right\|_2 \left\| \varphi(\psi(\tilde{\beta}^t)) - \tilde{\beta}^t \right\|_2\nonumber\\
&\leq 2\langle H_k (\gamma^{t+1}) - \beta^\ast, \beta^t + \eta X^T X (\beta^\ast - \beta^t) + X^T w - \beta^\ast  \rangle\nonumber\\
& \qquad+ 2\left\|  H_k (\gamma^{t+1}) - \beta^\ast\right\|_2 \left(\left\| H_k(\tilde{\beta}^t) - \tilde{\beta}^t \right\|_2 +\right.\nonumber\\
&\qquad\qquad\qquad\qquad\qquad\qquad\left. \left\| H_k(\tilde{\beta}^t) - \vphi(\psi(\tilde{\beta}^t)) \right\|_2\right)\nonumber\\
&\stackrel{(b)}{\leq} 2\langle H_k (\gamma^{t+1}) - \beta^\ast, \beta^t + \eta X^T X (\beta^\ast - \beta^t) + X^T w - \beta^\ast  \rangle\nonumber\\
&\qquad\qquad + 2\left\|  H_k (\gamma^{t+1}) - \beta^\ast\right\|_2 \left(\ve_1 + \ve_2\right)\label{eq:proof-second-step},
\end{align}
where $(a)$ follows form the Cauchy-Schwarz inequality and from the definition of $\tilde{\beta}^t$, $(b)$ follows from Assumption~\ref{ass:1} and Lemma~\ref{lemma:count-sketch-works}. We will now turn our attention to the first inner-product in \eqref{eq:proof-second-step}. With some rearrangement of terms, one can see that
\begin{align}
\normalsize
&\langle H_k (\gamma^{t+1}) - \beta^\ast, \beta^t + \eta X^T X (\beta^\ast - \beta^t) + X^T w - \beta^\ast\rangle \nonumber\\
&= \langle H_k (\gamma^{t+1}) - \beta^\ast, \beta^t - \beta^\ast\rangle -\eta \langle X\left(H_k (\gamma^{t+1}) - \beta^\ast\right) , \nonumber\\
&\qquad\qquad X(\beta^t - \beta^\ast)\rangle + \eta\langle H_k(\gamma^{t+1}) - \beta^\ast, X^T w \rangle\nonumber\\
&\stackrel{(a)}{\leq} \rho \left\| H_k (\gamma^{t+1}) - \beta^\ast \right\|_2\left\| \beta^t - \beta^\ast \right\|_2\nonumber\\
&\qquad\qquad\qquad\qquad\qquad\qquad + \langle H_k(\gamma^{t+1}) - \beta^\ast, X^T w\rangle\nonumber\\
& \stackrel{(b)}{\leq}  \rho \left\| H_k (\gamma^{t+1}) - \beta^\ast \right\|_2\left\| \beta^t - \beta^\ast \right\|_2 \nonumber\\
&\qquad\qquad+\left\| H_k (\gamma^{t+1}) - \beta^\ast \right\|_2 \sqrt{\frac{\sigma_w^2 (1+\mu) k \log p}{n}}\label{eq:proof-third-step}
\end{align}
where $(a)$ follows from Lemma~\ref{lemma:inner-product-concentration} and setting $\eta = 1$. $(b)$ follows from Lemma~\ref{lemma:another-inner-product}. 

Putting \eqref{eq:proof-second-step} and \eqref{eq:proof-third-step}, we get
\begin{align}
\left\| H_k\left(\gamma^{t+1}\right) -  \gamma^t\right\|_2 &\leq 2\rho \left\| \beta^t - \beta^\ast \right\|_2\nonumber\\
&\; + 2\sqrt{\frac{\sigma_w^2 (1+\mu) k \log p}{n}} + 2\left( \ve_1 + \ve_2 \right).
\end{align}
Putting this together with \eqref{eq:proof-first-step} gives us the desired result. 

\subsection{On Assumption~\ref{ass:1}}
\label{sec:assumption}
In the full-gradient version of the \mission{} algorithm, one might modify the algorithm explicitly to ensure that Assumption~\ref{ass:1}. Towards this end, one would simply ensure that the gradients vector is attenuated on all but its top $k$ entries at each step. 

It is not hard to see that this \emph{clean-up} step will ensure that Assumption~\ref{ass:1} holds and the rest of the proof simply goes through. In \mission{} as presented in the manuscript, we employ stochastic gradient descent (SGD). While the above proof needs to be modified for it to be applicable to this case, our simulations suggest that this clean-up step is unnecessary here. We suspect that this is due to random cancellations that are introduced by the SGD. This is indeed an exciting avenue for future work.

\end{document}